\newcommand{\aX}{\mathcal{X}}
\newcommand{\reals}{\mathbf{R}}
\newcommand{\aV}{\mathcal{V}}
\newcommand{\aB}{\mathcal{B}}
\newcommand{\dbid}{\tilde{b}}
\newcommand{\rbid}{\bar{b}}
\newcommand{\ds}{\displaystyle}
\newcommand{\dom}{\mbox{supp}\ }
\newcommand{\dis}{\pi}
\newcommand{\rai}{\mu}
\newcommand{\vdis}{\bm{\dis}}
\newcommand{\vrai}{\bm{\rai}}
\newcommand{\vb}{\mathbf{b}}
\newcommand{\mbtw}{\mathcal{W}}
\newcommand{\mbtl}{\mathcal{L}}
\newcommand{\Vw}{V_w}
\newcommand{\Vl}{V_{\ell}}
\newcommand{\vzero}{\mathbf{0}}
\newcommand{\core}{\mathcal{C}}
\newcommand{\bicore}{\mathcal{C}_2}
\newcommand{\vp}{\mathbf{p}}
\newcommand{\vd}{\mathbf{d}}
\newcommand{\vx}{\mathbf{x}}
\newcommand{\vlambda}{\bm{\lambda}}
\theoremstyle{plain} 
\newtheorem{theorem}{Theorem}
\theoremstyle{plain} 
\newtheorem{proposition}{Proposition}
\theoremstyle{plain} 
\newtheorem{lemma}{Lemma}
\theoremstyle{definition} 
\newtheorem{definition}{Definition}
\theoremstyle{definition} 
\newtheorem{property}{Property}
\theoremstyle{definition} 
\newtheorem{example}{Example}
\theoremstyle{remark} 
\newtheorem*{remark}{Remark}
\DeclareMathOperator*{\argmax}{arg\,max}
\title{Bidder Feedback in First-Price Auctions for Video Advertising}
\author[1]{S\'{e}bastien Lahaie}
\author[1]{Benjamin Schaeffer}
\author[1]{Yuanjun Zhou}
\affil[1]{Google}
\date{\today}
\begin{document}
\maketitle


\begin{abstract}
In first-price auctions for display advertising, exchanges typically communicate the “minimum-bid-to-win” to bidders after the auction as feedback for their bidding algorithms. For a winner, this is the second-highest bid, while for losing bidders it is the highest bid. In this paper we investigate the generalization of this concept to general combinatorial auctions, motivated by the domain of video advertising. In a video pod auction, ad slots during an advertising break in a video stream are auctioned all at once, under several kinds of allocation constraints such as a constraint on total ad duration. We cast the problem in terms of computing bid updates (discounts and raises) that maintain the optimality of the current allocation. Our main result characterizes the set of joint bid updates with this property as the core of an associated bicooperative game. In the case of the assignment problem—a special case of video pod auctions—we provide a linear programming characterization of this bicooperative core. Our characterization leads to several candidates for a generalized minimum-bid-to-win. Drawing on video pod auction data from a real ad exchange, we perform an empirical analysis to understand the bidding dynamics they induce and their convergence properties.
\end{abstract}

\section{Introduction}

In recent years the display advertising industry has seen a rapid shift from second-price auctions to first-price auctions as the predominant mechanism for allocating and pricing ads. Several well-known exchanges started rolling out first-price auctions in 2017, and by 2019 the industry-wide move was complete~\citep{sluis17,sluis19}. The transition was driven by several factors, such as calls for enhanced transparency and the increasingly popular practice of header bidding, which is based on first-price auctions. On the demand side, advertisers have been compelled to adapt to this new environment. Whereas second-price auctions are truthful, under first-price auctions advertisers have to apply and optimize bid shading algorithms to avoid overpaying for their ad placements~\citep{benes18}.

To help advertisers calibrate their bids for future auctions, and to ease the transition from the second-price format, first-price ad exchanges provide price feedback to bidders. For bid shading algorithms, an important component of this feedback is the ``minimum-bid-to-win'' field reported by several platforms, including Google Ad Exchange. In a standard first-price auction for a single advertising slot, the meaning of this field is intuitive and straightforward: the minimum-bid-to-win for the winner is the second-highest bid, and for losing bidders it is the highest bid.\footnote{See for instance the definition of the {minimum\_bid\_to\_win} field in the OpenRTB protocol;~\url{https://developers.google.com/authorized-buyers/rtb/openrtb-guide##bidfeedback}.} Advertisers can use this information to create models of competing bid distributions for ad inventory and optimize their bid shading algorithms, streamlining the process towards market equilibrium.

In this paper we investigate how to generalize the concept of minimum-bid-to-win to general combinatorial auctions, motivated by pod auctions used in video advertising. An ad pod is a commercial break of multiple ads within online video content, similar to traditional linear television. Video content that shows ad pods includes streaming channels on connected TVs, live-streamed events (e.g., sports), and mobile apps for TV channels. In contrast to standard TV ad breaks which are planned ahead of time, ad pods are filled in real-time by a combination of pre-booked reservation ads and programmatic ads that bid for placement into the pod.

Figure~\ref{fig:video-pod-auction} provides a stylized illustration of a pod auction instance. The pod consists of a sequence of ad slots (positions). It has constraints on the maximum number of ads in the pod, and the maximum total duration of the ad break.
There are eight candidate ads bidding into the pod auction, with varying ad durations. Typically, ads have a single value for being shown, uniform across the positions. It is also possible for ads to place per-position bids, like the ad in the bottom-left in the figure; the first and last position in the ad break are often considered the most valuable.\footnote{See section 7.6 of the OpenRTB 2.6 protocol for a specification of pod bidding; \url{https://iabtechlab.com/wp-content/uploads/2022/04/OpenRTB-2-6_FINAL.pdf.}} Another common feature of pod auctions is that advertisers may specify exclusion constraints to avoid being shown along with competitors in the same pod (e.g., two different brands of running shoes may not want to show together); in the figure this is shown as a red line between two of the ads. The auction computes the pod allocation that maximizes the total bid value, subject to the contraints, and charges each winning advertiser its bid. The problem of computing an optimal pod is quite general: it includes as special cases the assignment problem~\citep{koopmans1957assignment}, the knapsack problem, and even independent set.

The generalization of minimum-bid-to-win to pod auctions raises several questions. How should the feedback be encoded? Should guidance be given on how to win each position in the pod? In a video pod auction, there can be several winning ads shown. Should the feedback guarantee that they remain winning if they all updates their bids at once? The same question holds for losing bidders, but there it is even more delicate because not all combinations of ads are feasible. Finally, if several different generalizations are proposed, how should they be evaluated against each other?

To appreciate that the question of minimum-bid-to-win is not straightforward in pod auctions, consider the following example. There are three ads competing for placement in a pod with two positions and a maximum duration of 30s. The first ad has a 30s duration and bids 10 to be shown, while the second and third ads both have a 15s duration and also bid 10 to be shown. The optimal pod contains the latter two ads. Holding the other bids fixed, the second ad could bid as low as 0 (or some small increment above 0) and still be shown, and the same argument holds for the third ad. However, the publisher might find it peculiar to send feedback to both the second and third ads saying they can just bid 0; after all, there is real competition for placement in the pod on the part of the first ad.

This pod auction example is equivalent to a classic example in the literature on combinatorial auctions showing that the VCG mechanism can lead to zero revenue~\citep{ausubel2006lovely,day2008core}. As the example demonstrates, one problem with using VCG payments as feedback is that win guarantees no longer hold if several agents adopt their feedback at once. Another problem is that VCG payments are only defined for winning bidders. For losing bidders they are 0, which is not informative and not a generalization of the minimum-bid-to-win for a single ad slot.
\begin{figure}[t!]
\centering
\includegraphics[scale=0.25]{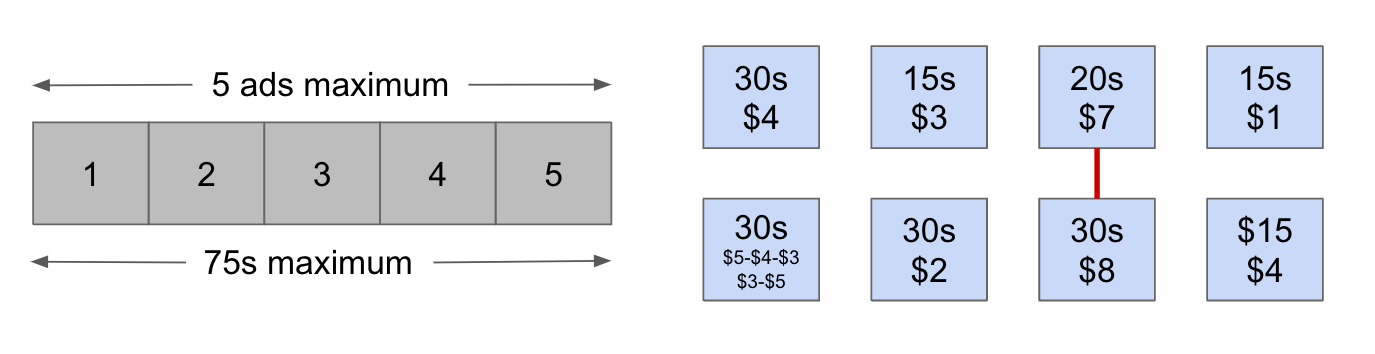}
\vspace{-10pt}
\caption{Illustration of a pod auction instance.}
\label{fig:video-pod-auction}
\end{figure}

\paragraph{Our contributions.} In this paper we propose and evaluate three candidates for a generalized minimum-bid-to-win policy, building on the concepts of VCG payments and the core from cooperative game theory. We first formalize the question as one of recommending uniform discounts or raises to the agents' bids. The feedback is always a single scalar to each bidder, which avoids any difficulties with sending feedback in some bidding language. We justify our setup via established results on equilibria in profit-target strategies in first-price combinatorial auctions~\citep{bernheim1986menu,milgrom2004putting}. 

We then characterize the set of bid updates (discounts or raises) which maintain the optimality of the current allocation, first assuming that just one bidder updates at a time, and then covering the case where several winning bidders (or several losing bidders) update their bids at once. Our main result provides a complete characterization for updates by any mix of winning and losing bidders, using the concept of a bicooperative game (an extension of cooperative games)~\citep{bilbao2000cooperative}. Under this framework, a generalized minimum-bid-to-win policy is any policy that selects a maximal vector of discounts and raises from within these characterizations.

We first consider updates by a single bidder at a time. Here the VCG payment provides feedback to winning bidders, and we define a symmetric concept called the `VCG raise' to provide feedback to losing bidders, emphasizing the parallels between the two concepts. We next turn to updates by several winning bidders at once, or several losing bidders at once. We show that the core of the allocation problem (in the sense of cooperative game theory) characterizes the feedback for these bidders. We show that the same holds for losing bidders with respect to a different coalitional value function, the same function which is used to define VCG raises. To prove our main result on updates from general combinations of bidders, we propose a new (to our knowledge) definition for the core of a bicooperative game, which we call the \emph{bicore}. Our concept bears similarity with the core concept proposed by~\citet{bilbao2007core}, with some key differences. For the assignment problem---a special case of video pod auctions---we give a linear programming characterization of the bicore, extending classic results by~\citet{shapley1971assignment}.

Our results lead to three candidates for a generalized minimum-bid-to-win policy, based on VCG payments, the core, and the bicore. Taking inspiration from the work of~\citet{hoy2013dynamic}, we propose to evaluate the candidates based on the bidding dynamics they induce. 
We simulate the dynamics of the feedback policies on the video pod auction logs of a real ad exchange, and report on how the three compare in terms of their convergence rate and the efficiency of their final allocation.

\paragraph{Incentives.} We must emphasize that although our work draws on concepts like VCG payments and the core, it is not directly concerned with incentives. The agents in our model are not strategic. They have a private value, but for the bulk of the paper values do not matter. The central question is how to provide informative feedback to agents purely based on their current bids. The values come into play when we define bidding dynamics to compare the different feedback policies in simulations, and to evaluate the efficiency of their endpoints. In these dynamics agents never raise their bids above their values because they are participating in first-price auctions.

\paragraph{Related Work.} Our work connects with many lines of research in computer science, operations research, and economics. Bidder feedback and support for combinatorial auctions has been an active area of research. \citet{adomavicius2005toward} introduce the concepts of winning and deadness levels to communicate the lowest level at which a bid could win or have a chance to win in the future, in a model where bids for packages persist across rounds. The evaluation of these feedback mechanisms ranges from complexity analyses to lab experiments~\citep{adomavicius2012effect,petrakis2013ascending}.

The concept of the core features prominently in the literature on combinatorial auctions~\citep{ausubel2002ascending,parkes2000iterative}. Core-selecting payments address several issues inherent in the VCG mechanism, such as low revenue and collusive behavior~\citep{ausubel2006lovely,day2007fair,day2008core}, and there is ongoing research on algorithms to compute core payments~\citep{day2012quadratic,bunz2015faster,bunz2022designing}. Our framing of the problem of providing bidder feedback was strongly influenced by this line of work.

Our results on bidder feedback in the assignment problem build on classic results by~\citet{shapley1971assignment} and~\citet{leonard1983elicitation}. Linear programming characterizations of the core exist for many other combinatorial optimization problems, including facility location, spanning trees, and routing~\citep{goemans2004cooperative,granot1984core,owen1975core}. To prove our main result, we found it necessary to define a value function over a pair of coalitions. This led us to the concept of a bicooperative game, introduced by~\citet{bilbao2000cooperative}. Many fundamental concepts from cooperative game theory have extensions to bicooperative games, including the Shapley value, the Weber set, and most notably the core~\citep{bilbao2007core}.

The problem of bidding in first-price auctions has seen renewed interest since display advertising's shift to this format~\citep{gligorijevic2020bid,zhou2021efficient,paes2020competitive,conitzer2022pacing}. Of course, first-price auctions have long been studied in economics. \citet{bernheim1986menu}'s results on first-price menu auctions were particularly important for our work. We adopt their model of a menu auction in order to cover a wide range of allocation problems, including video pods and combinatorial auctions. 
Their results on equilibria in profit-target strategies provide a justification for our focus on reporting discounts or raises as feedback. 
For our experimental evaluation, we took inspiration from the work of~\citet{hoy2013dynamic} who analyze bidding dynamics in first-price auctions with profit-target bidding. Among many relevant results, they make the important observation that equilibrium can be restored in a first-price auction by augmenting the bidding language with a single scalar profit-target. 

In the domain of video ads,~\citet{nisan2009google} report on an early Google system (now discontinued) where advertisers could bid on TV ad slots one day ahead. Today's video pod auctions run in real-time when the content starts or even right before the ad break (e.g., in live-streamed events).~\citet{goetzendorff2015compact} introduce a bidding language for TV advertising and investigate algorithms for winner determination and core payments. The pods in their allocation problem have essentially the same structure and constraints as online video pods, but in their model several pods are auctioned off at once ahead of time. 

\paragraph{Outline.}

The remainder of the paper is organized as follows. In Section~\ref{sec:prelims} we introduce and justify our model, and give an initial formalization of the concept of minimum-bid-to-win. In Section~\ref{sec:single-vcg} we consider scenarios where a single agent updates its bid at a time, and illustrate the VCG discount and VCG raise via two examples. In Section~\ref{sec:multi-core} we consider scenarios where multiple winning bidders (or multiple losing bidders) can update their bids at once, making a connection to the core. We then prove our main result on combinations of updates by any mix of winning and losing bidders. In Section~\ref{sec:assignemnt-problem} we give a linear programming characterization of the bicore of the assignment problem and prove that it has a lattice structure. 
Proofs from this section are deferred to the appendix.
In Section~\ref{sec:bidding-dynamics} we introduce bidding dynamics, and in Section~\ref{sec:empirical-analysis} we report on results from simulations of the dynamics under feedback policies based on VCG, the core, and the bicore. Section~\ref{sec:conclusion} concludes.


\section{Model and Preliminaries}
\label{sec:prelims}

We consider a model where a seller (auctioneer) selects an allocation from a finite set $\aX$ based on bids placed by a set of agents (bidders). Let $N$ denote the set of agents and let $n = |N|$. Each agent $i$ has a non-negative \emph{valuation} function $v_i: \aX \rightarrow \reals_+$ mapping allocations to non-negative values, drawn from a set $\aV_i$. Agents have quasi-linear utility: when the selected allocation is $x \in \aX$ and agent $i$ is charged $p_i$, its utility is $v_i(x) - p_i$. The \emph{support} of agent $i$'s valuation is defined as $\dom v_i = \{x \in \aX : v_i(x) > 0\}$. The seller itself has zero reservation value for any allocation; its utility is the total revenue $\sum_{i \in N} p_i$ collected from the agents.

Each agent $i$ participates in an auction run by the seller by submitting a \emph{bid} function $b_i : \aX \rightarrow \reals_+$, drawn from a set $\aB_i$. Naturally, the bid may be different from the agent's actual valuation. We say that an allocation $x$ is \emph{optimal} if it maximizes $\sum_i b_i(x)$, whereas we say it is \emph{efficient} if it maximizes $\sum_i v_i(x)$.
Given the agents' bids $\vb = (b_1, \dots, b_n)$, the set of optimal allocations is denoted as $\aX^* = \argmax_{x \in \aX} \sum_{i \in N} b_i(x)$. We will write $\aX^*(\vb)$ whenever we find it necessary to make the bids defining the optimal allocations explicit.
The seller runs a \emph{first-price} (or \emph{pay-your-bid}) auction, which proceeds as follows:
\begin{enumerate}
\item Each agent $i$ submits a bid $b_i \in \aB_i$.
\item The seller selects an optimal allocation $x^* \in \aX^*(\vb)$.
\item The seller charges each $i$ the amount $p_i = b_i(x^*)$.
\end{enumerate}

\paragraph{Bid updating.}
In this work we are interested in understanding whether a bid update would change the optimal allocation, and how the update might affect whether an agent "wins" or "loses". We therefore require some structure to the $\aB_i$ bid spaces. Each $\aB_i$ is agent-specific because it depends on the agent's valuation $v_i$. First, we require that $v_i \in \aB_i$, and that $b_i \leq v_i$ for all $b_i \in \aB_i$, where the inequality is understood point-wise. (This condition is justified by the fact that bidding higher than one's value is a dominated strategy in a first-price auction.) In particular, $b_i(x) = 0$ for $x \not\in \dom v_i$; however, it is possible to have $b_i(x) = 0$ even if $x \in \dom v_i$.
Next, we require that the set of bids $\aB_i$ be closed under two kinds of updates: bid discounting and bid raising.
\begin{definition} \label{def:bid-updates}
Given a bid function $b_i$ and a scalar \emph{discount} $\dis_i \geq 0$, the \emph{discounted bid} function $\dbid_i$ is defined as $\dbid_i(x) = \max\{b_i(x) - \pi_i, 0\}$. Given a scalar \emph{raise} $\rai_i \geq 0$, a \emph{raised bid} function $\rbid_i$ is defined as any function that satisfies $\rbid_i(x) = b_i(x) + \mu_i$ for $b_i(x) > 0$ and $\rbid_i(x) \leq \mu_i$ for $b_i(x) = 0$.
\end{definition}
\noindent
Note that a bid maps to a unique discounted bid, but there may be a set of possible raised bids for any original bid. This reflects the fact that bid discounting can lose information when the bid on a particular allocation reaches 0. This means that there is no single reverse operation, so for an allocation that has a bid of 0 an agent can raise its bid however it wishes up to $\rai_i$. We emphasize one exception to this: since $b_i \leq v_i$ for all $b_i \in \aB_i$, and $\aB_i$ is closed under bid-raising, a raised bid must satisfy $\rbid_i(x) = 0$ for all $x \not\in \dom v_i$. In other words, an agent's bid on an specific allocation can only rise above 0 if the agent's value for the allocation is positive.

\paragraph{Winning and losing.}
The main reason for introducing the concept of the support of a valuation $v_i$ is that it allows us to formally define whether an agent is "winning" or "losing". Intuitively, an agent wins if it has positive value for the selected allocation.
\begin{definition}
 Given an allocation $x \in \aX$, we say that an agent \emph{wins} in the allocation if $x \in \dom v_i$, and that it \emph{loses} otherwise. We say that an agent is \emph{winning} (\emph{losing}) if there exists an optimal allocation $x^* \in \aX^*$ in which it wins (loses).  
\end{definition}
\noindent
Note that according to these definitions an agent can be both winning and losing. Although this may seem contradictory at first glance, it simply reflects the fact that tie-breaking may ultimately determine whether the agent wins in the selected optimal allocation.

By working with an abstract set of feasible allocations $\aX$ our intention is to capture, in a generic way, the variety of allocation constraints that migth arise in specific applications. For instance, in a video pod auction, $\aX$ would disallow any allocation that shows two ads that have an exclusion constraint between them. A key requirement for our results is that the sets of allocations and bids (i.e., the environment) satisfy the following property.
\begin{property}
An environment exhibits \emph{agent-free-disposal} if for all $b_i \in \aB_i$ ($i \in N$), $S \subseteq N$, and $x \in \aX$ there exists $x' \in \aX$ such that $\sum_{i \not\in S} b_i(x') \geq \sum_{i \not\in S} b_i(x)$ and $x' \not\in \dom v_i$ for $i \in S$.
\end{property}
\noindent
The property intuitively captures the option of dropping a set of agents from an allocation, at no collective detriment to the other agents. We will assume that agent-free-disposal holds throughout the paper.

\paragraph{Minimum-bid-to-win.}
We are now in a position to generalize the concept of minimum-bid-to-win from a single-item auction to the general auctions over allocations that we have defined so far. Given bids $\vb = (b_1, \dots, b_n)$, we form the following set of discounts for a \emph{winning} agent $i$:
\begin{equation} \label{eq:mbtwin-set-i}
\mbtw_i = \left\{ \dis_i \geq 0 : \aX^*(\vb) \subseteq \aX^*(\dbid_i \,, \vb_{-i}) \right\}.  
\end{equation}
Take the maximum discount from $\mbtw_i$.
The \emph{minimum-bid-to-win} for $i$ is the discounted bid $\dbid_i$ which results from applying this discount. 
Symmetrically, we form the following set of raises for a \emph{losing} agent $i$:
\begin{equation} \label{eq:mbtlose-set-i}
\mbtl_i = \left\{ \rai_i \geq 0 : \aX^*(\vb) \subseteq \aX^*(\rbid_i \,, \vb_{-i}) \right\}.
\end{equation}
The \emph{maximum-bid-to-lose} is the raised bid that results from the maximum element of this set.

We do not introduce any special notation for the minimum-bid-to-win or maximum-bid-to-lose, because our focus in the rest of the paper will be on characterizing sets of discounts and raises like $\mbtw_i$ and $\mbtl_i$ under which optimal allocations remain optimal. Given such a characterization, a minimum-bid-to-win (maximum-bid-to-lose) is any discounted (raised) bid that results from a \emph{maximal} element of the set. The advantage of this perspective is that discounts and raises are simple scalars, whereas bid functions are more complicated objects whose structure is application-dependent. For instance, for computational reasons, the auction might be conducted using a bidding language~\citep{nisan2000bidding,boutilier2001bidding}.

\paragraph{Profit-target strategies.}

Our focus on bid updates of the specific form given in Definition~\ref{def:bid-updates} calls for some justification, as many other kinds of updates are possible. For this we appeal to~\citet{bernheim1986menu}'s results on menu auctions, which apply to our model; we will refer to the terminology and coverage of these results given by~\citet[Chapter 8]{milgrom2004putting}. A bid function $b_i$ is a \emph{$\rho_i$-profit-target strategy} if it takes the form $b_i(x) = \max\{v_i(x) - \rho_i, 0\}$. In other words, it is the bid function that results from applying a discount of $\rho_i$ to the agent's valuation, and by our assumptions this bid function is contained in $\aB_i$ for any profit-target $\rho_i$.

\citet{bernheim1986menu} show that in a first-price menu auction, no matter what the other agents' bids, an agent always has a best-reply in the form of a profit-target strategy~\cite[Theorem 8.6]{milgrom2004putting}. Furthermore, the vector of profit-targets $(\rho_1, \dots, \rho_n)$ is a bidder-optimal core payoff if and only if the corresponding profit-target strategies form a Nash equilibrium~\cite[Theorem 8.7]{milgrom2004putting}.\footnote{See Section~\ref{sec:multi-core} for background on the core and bidder-optimal core payoffs.} In particular, this means that Nash equilibria in such strategies are efficient. Taken together, these results support using discounts and raises as informative feedback for bidders: they can be interpreted as recommendations as to how agents should update their profit-targets. A discount (raise) is a recommendation for a bidder to increase (decrease) its profit-target by a specific amount.

In practice, the auction's bidding language may not be expressive enough on its own to represent all possible profit-target strategies. As explained in the introduction, current minimum-bid-to-win feedback in first-price display ad auctions consists of a single scalar in the message sent to a bidder. However, this is a minor obstacle since profit-target strategies can always be represented by augmenting the bidding language by a single scalar~\citep{hoy2013dynamic}. In display ad auctions, an extra field could be added to the feedback message, or the minimum-bid-to-win field itself could be repurposed to encode a discount or raise.


\section{Single-Agent Updates and VCG}
\label{sec:single-vcg}

In this section we study minimum-bid-to-win concepts based on bid updates from a single agent at a time, holding the other agents' bids fixed. This connects with concepts from the celebrated \emph{Vickrey-Clarke-Groves mechanism} (or VCG mechanism)~\citep{vickrey1961counterspeculation,clarke1971multipart,groves1973incentives}. However, the theory of VCG mechanisms only leads to feedback for winning bidders. Here we develop analogous concepts that provide feedback for losing bidders as well.

\subsection{VCG Discounts and Raises}

The well-known \emph{VCG payment} is a natural generalization of the concept of minimum-bid-to-win for a winning agent. In keeping with common notation used to describe VCG payments, for an agent $i$ we write $N - i$ rather than $N \setminus \{i\}$. Consider the following function over subsets of bidders $S \subseteq N$, also called \emph{coalitions}:
\begin{equation} \label{eq:winning-coalitional-value}
    \Vw(S) = \max_{x \in \aX} \sum_{i \in S} b_i(x).
\end{equation}
Here the subscript $w$ alludes to the fact that the function is associated with winning bidders. The \emph{VCG discount} for a winning agent $i$ is defined as $\Vw(N) - \Vw(N-i)$, and its \emph{VCG payment} for the selected allocation $x^* \in \aX^*$ is $b_i(x^*) - [ \Vw(N) - \Vw(N-i) ]$. In other words, the VCG payment is the agent's bid for $x^*$ under the discounted bid $\dbid_i$ where it has applied the discount $\pi_i = \Vw(N) - \Vw(N-i)$. (The fact that the VCG discount is no greater than $b_i(x^*)$ will be confirmed shortly.)
It is a standard result that the VCG payment corresponds to the minimum bid that an agent can place and still remain winning, holding the other agents' bids fixed~\cite[Chapter 2]{milgrom2004putting}. 

The VCG discount is therefore a meaningful discount to communicate as feedback for winning bidders under the framework established in Section~\ref{sec:prelims}. However, for losing bidders we have $\Vw(N) = \Vw(N-i)$ so the VCG discount is 0, which is uninformative. We propose here a symmetric concept that we call the `VCG raise'. Consider the following coalitional value function:
\begin{equation} \label{eq:losing-coalitional-value}
    \Vl(S) = \max_{x \in \aX} \left\{ \sum_{i \in N} b_i(x) \,:\, x \in \dom v_i,\, \forall i \in N \setminus S \right\}.
\end{equation}
The subscript $\ell$ alludes to the fact that the function is associated with losing bidders. To understand why, note that for a losing bidder $i$, $\Vl(N - i)$ corresponds to the optimal bid value among allocations where $i$ is constrained to win. We define the \emph{VCG raise} as $\Vl(N) - \Vl(N-i)$. Our claim is that it is the largest raise that losing agent $i$ can apply to its bid and still remain losing, holding the other agents' bids fixed. The following proposition follows as a corollary to our main theorem in Section~\ref{sec:multi-core}.
\begin{proposition} \label{prop:vcg-characterizations}
The sets $\mbtw_i$ and $\mbtl_i$ defined in equations~\eqref{eq:mbtwin-set-i}--\eqref{eq:mbtlose-set-i} are characterized by
\begin{eqnarray*}
\mbtw_i & = & \left\{ \dis_i : 0 \leq \dis_i \leq \Vw(N) - \Vw(N-i) \right\} \\   
\mbtl_i & = & \left\{ \rai_i : 0 \leq \rai_i \leq \Vl(N) - \Vl(N-i) \right\}    
\end{eqnarray*}
\end{proposition}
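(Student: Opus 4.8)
The plan is to prove each of the two set equalities by establishing both inclusions directly from the definition of optimality together with the agent-free-disposal property; this is really the single-agent ($S=\{i\}$) specialization of the bicore characterization developed in Section~\ref{sec:multi-core}, but it can be argued in a self-contained way. Throughout I would write $g(x) = \sum_{j \in N} b_j(x)$, and set $\delta = \Vw(N) - \Vw(N-i)$ for the discount case and $\eta = \Vl(N) - \Vl(N-i)$ for the raise case. The preliminary observation I would record first is that every optimal allocation $x^* \in \aX^*(\vb)$ satisfies $b_i(x^*) \geq \delta$: since $\sum_{j \neq i} b_j(x^*) \leq \Vw(N-i)$, we get $b_i(x^*) = g(x^*) - \sum_{j \neq i} b_j(x^*) \geq \Vw(N) - \Vw(N-i) = \delta$. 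This simultaneously confirms the earlier remark that the VCG discount never exceeds $b_i(x^*)$.

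For the discount set $\mbtw_i$ I would argue sufficiency and necessity separately. For sufficiency, fix $\dis_i \leq \delta$ and an optimal $x^*$. Because $b_i(x^*) \geq \delta \geq \dis_i$, the discounted bid is $\dbid_i(x^*) = b_i(x^*) - \dis_i$, so $x^*$ has new value $g(x^*) - \dis_i = \Vw(N) - \dis_i$. For an arbitrary competitor $x$ I would split on whether $b_i(x) \geq \dis_i$: if so, its new value is $g(x) - \dis_i \leq \Vw(N) - \dis_i$; if not, its new value is $\sum_{j \neq i} b_j(x) \leq \Vw(N-i) = \Vw(N) - \delta \leq \Vw(N) - \dis_i$. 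Either way $x^*$ stays optimal, so $\dis_i \in \mbtw_i$. For necessity, suppose $\dis_i > \delta$. Applying agent-free-disposal with $S=\{i\}$ to any allocation attaining $\Vw(N-i)$ yields $x^0 \notin \dom v_i$ with $\sum_{j \neq i} b_j(x^0) \geq \Vw(N-i)$, hence new value $\Vw(N) - \delta$; comparing against the new value $\Vw(N) - \min\{\dis_i, b_i(x^*)\}$ of an optimal $x^*$ shows $x^0$ strictly dominates once $\min\{\dis_i, b_i(x^*)\} > \delta$, displacing $x^*$ and certifying $\dis_i \notin \mbtw_i$.

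For the raise set $\mbtl_i$ I would proceed symmetrically, using $\Vl(N) = \Vw(N)$ and $\Vl(N-i) = \max\{g(x) : x \in \dom v_i\}$, and the shape of a raised bid (which forces $\rbid_i(x) = 0$ for $x \notin \dom v_i$ and $\rbid_i(x) \leq b_i(x) + \rai_i$ for $x \in \dom v_i$), reading membership as preservation of optimality for every admissible raised bid. The structural fact to exploit is that when $\eta > 0$ no optimal allocation has $i$ winning, so each $x^* \in \aX^*(\vb)$ has $x^* \notin \dom v_i$ and its value is untouched by any raise. For sufficiency, with $\rai_i \leq \eta$, such an $x^*$ keeps value $\Vw(N)$, while any competitor has value at most $\Vw(N)$ (trivially if $x \notin \dom v_i$, and because $g(x) + \rai_i \leq \Vl(N-i) + \eta = \Vw(N)$ if $x \in \dom v_i$), so $x^*$ remains optimal for every admissible raised bid. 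For necessity, with $\rai_i > \eta$, I would take $x^1 \in \dom v_i$ attaining $\Vl(N-i)$, raise maximally so its value becomes $\Vl(N-i) + \rai_i > \Vw(N)$, and observe that this displaces any $i$-losing optimal allocation (which exists since $i$ is losing).

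The main obstacle I anticipate is at the boundary, caused by the $\max\{\cdot,0\}$ flooring in the discount and the non-uniqueness of the raised bid, both interacting with the fact that membership requires only the weak inclusion $\aX^*(\vb) \subseteq \aX^*(\cdot, \vb_{-i})$, so ties are tolerated. Concretely, the necessity argument for $\mbtw_i$ is decisive only when some optimal allocation has $b_i(x^*) > \delta$; the delicate case is when every optimal allocation simultaneously maximizes the complementary coalition's value, where discounting can drive $b_i(x^*)$ down to exactly $0$ without strictly breaking optimality. Pinning down this case cleanly is exactly where the bound $b_i(x^*) \geq \delta$ and the agent-free-disposal construction of the decisive competitor $x^0$ (respectively $x^1$) must be deployed carefully, and it is the step I would treat most delicately, since it is what ties the set of admissible scalars to the marginal-value gap $\delta$ (respectively $\eta$).
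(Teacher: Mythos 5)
The paper never proves this proposition directly: it is obtained as a corollary of Theorem~\ref{thm:main-result} through the single-agent specializations in Lemma~\ref{lem:relationships} (take $S=\{i\}$, $T=\emptyset$ for discounts and $T=\{i\}$ for raises; monotonicity of $V$ makes the singleton constraint the binding one). Your route is genuinely different: a self-contained two-inclusion argument in which agent-free-disposal manufactures the decisive competitor $x^0$ (resp.\ $x^1$), and the preliminary bound $b_i(x^*)\geq \Vw(N)-\Vw(N-i)$ does the work that validity does in the paper's development. Both sufficiency directions are complete and correct, and are arguably more transparent than the paper's, since they exhibit exactly which allocation would overtake $x^*$ were the bound violated. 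The raise-side necessity is also sound: when $\eta>0$ no optimal allocation has $i$ winning, and the maximal admissible $\rbid_i$ pushes $x^1$ strictly above $\Vl(N)$, displacing every $i$-losing optimum.

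The gap is in necessity for $\mbtw_i$, and while you have correctly located it, you cannot close it as stated, because in the boundary case the claimed equality is actually false. Suppose every optimal allocation $x^*$ satisfies $b_i(x^*) = \delta := \Vw(N)-\Vw(N-i)$, i.e., every optimal allocation also attains $\Vw(N-i)$ (e.g.\ a single item where only agent $i$ bids positively, so $\delta = b_i(x^*)$). Then for any $\dis_i > \delta$ the discounted bid floors at $\dbid_i(x^*)=0$, the new value of $x^*$ is $\Vw(N)-\delta = \Vw(N-i)$, and every competitor's new value is at most $\max\{\Vw(N)-\dis_i,\, \Vw(N-i)\} = \Vw(N-i)$; so $x^0$ only ties with $x^*$ rather than displacing it, and $\aX^*(\vb)\subseteq\aX^*(\dbid_i,\vb_{-i})$ continues to hold. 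Hence $\mbtw_i = [0,\infty)$ in this case, not $[0,\delta]$. No careful deployment of $x^0$ repairs this; one must either restrict $\mbtw_i$ to valid discounts ($\dis_i \leq \min_{x^*\in\aX^*} b_i(x^*)$) or assume some optimal allocation has $b_i(x^*) > \delta$. Note that the paper's own corollary derivation silently inherits the same issue in the unproven third bullet of Lemma~\ref{lem:relationships}, since Theorem~\ref{thm:main-result} builds validity into its equivalence while the definition of $\mbtw_i$ in~\eqref{eq:mbtwin-set-i} does not; so this is a defect of the statement at the boundary rather than of your strategy specifically, but your write-up should say so explicitly rather than hope the case resolves favorably.
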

It is possible for the maximization problem defining $\Vl$ in~\eqref{eq:losing-coalitional-value} to be infeasible for a particular set $S$. In that case, we set it to $\Vw(S) = -\infty$ following the usual convention for infeasible maximizations. If $\dom v_i = \emptyset$ (i.e., the agent has zero value for every allocation), then $\Vw(N-i) = -\infty$, and according to the previous proposition $\mbtl_i$ consists of all raises $\mu_i \geq 0$, taking $-(-\infty) = +\infty$. This reflects the fact that the agent remains losing no matter how much it raises its bid. However, in practice such an agent would have no reason to participate in the auction, and could be excluded prior to running the auction. In the remainder of the paper, we will assume that for each $i \in N$, $\dom v_i \neq \emptyset$, ensuring that $\mbtw_i$ and $\mbtl_i$ are bounded for all agents.

\begin{remark}
There is conceptual symmetry between VCG discounts and VCG raises. Note that for a winning bidder $i$, we have $\Vl(N) - \Vl(N - i) = 0$, so its VCG raise is 0, in analogy with the VCG discount being 0 for losing bidders. If $x^*$ is an allocation maximizing~\eqref{eq:winning-coalitional-value} for coalition $N$ (i.e., if it is an optimal allocation), then for a winning bidder $i$ the discounted bid $\dbid_i(x^*) = b_i(x^*) - [\Vw(N)- \Vw(N-i)]$ (i.e., the VCG payment) does not depend on its original bid $b_i(x^*)$, as the latter cancels out. 
Similarly, for a losing bidder $i$, if $x^*$ is an allocation maximizing~\eqref{eq:losing-coalitional-value} for coalition $N-i$ then the raised bid $\dbid_i(x^*) = b_i(x^*) + [\Vl(N)- \Vl(N-i)]$ does not depend on its original bid $b_i(x^*)$, as again the latter cancels out. These facts reflect the intuition that the minimum-bid-to-win or maximum-bid-to-lose for an agent should not depend on the agent's own bid, only the other agents' bids. There is also symmetry in the definitions of $\Vw$ and $\Vl$. By the agent-free-disposal property, the maximization in~\eqref{eq:winning-coalitional-value} can be reformulated as maximizing the total bid value, subject to the constraint that agents in $N \setminus S$ are losing. This is directly analogous to~\eqref{eq:losing-coalitional-value}, which maximizes the total bid value subject to the constraint that agents in $N \setminus S$ are winning.
\end{remark}

\subsection{Two Examples}
\label{ex:vcg-examples}

Here we record two examples which we will return to later to understand the similarities and differences between alternative feedback policies. The first is the single-item auction; although this example is basic, the feedback policies are not all identical even in this special case. The second is the three-agent video pod auction example from the introduction, equivalent to a classic example given by~\citet{ausubel2006lovely} to show that the VCG mechanism can lead to zero revenue despite competition for the items up for sale.

\begin{example} {\em (Single-item auction)}\label{ex:single-item-auction} 
\ Suppose there is a single indivisible item and that the bids are $b_1 > b_2 > \dots > b_n$. The VCG discount for winning agent 1 is $\Vw(N) - \Vw(N-1) = b_1 - b_2$, leading to the discounted bid $\dbid_1 = b_1 - (b_1 - b_2) = b_2$, as expected. For each losing agent $i \neq 1$, the VCG raise is $\Vl(N) - \Vl(N-i) = b_1 - b_i$, leading to the raised bid $\rbid_i = b_i + (b_1 - b_i) = b_1$. Using VCG feedback therefore reproduces the current minimum-bid-to-win feedback policy for single slots in display advertising.
\end{example}
\begin{example} {\em (Zero-VCG)}\label{ex:zero-vcg} 
\ Consider a video pod auction with two positions and maximum duration of 30s. There are three agents with uniform values across positions. Agents are defined by value-duration pairs: (10, 30s) for agent 1, (10, 15s) for agent 2, (10, 15s) for agent 3. Suppose the agents bid their values. The optimal pod contains the ads from agents 2 and 3. For losing agent 1, the VCG raise is 10. However, the agent would not apply this raise since it is already bidding its value. For winning agents 2 and 3, the VCG discount in each case is 10. This means that the discounted bid for each agent to stay winning is 0.
\end{example}

\subsection{Valid Discounts and Raises}

Our main characterization result will rely on certain basic conditions that bidder feedback should satisfy.
The following definition captures these conditions.
\begin{definition}
Given bids $(b_1, \dots, b_n)$, the discounts $\vdis = (\dis_1, \dots, \dis_n)$ and raises $\vrai = (\rai_1, \dots, \rai_n)$ are \emph{valid} if the following conditions hold.
\begin{itemize}
    \item For every winning agent $i$, $\rai_i = 0$ and $0 \leq \dis_i \leq \min_{x^* \in \aX^*} b_i(x^*)$.
    \item For every losing agent $i$, $\dis_i = 0$ and $\rai_i \geq 0$.
\end{itemize}
\end{definition}
\noindent
To interpret this definition, first note that a winning agent should not want to raise its bid (so $\mu_i = 0$), and similarly a losing agent should not want to discount its bid (so $\pi_i = 0$). Moreover, if the selected optimal allocation is $x^* \in \aX^*$, then a winning agent $i$'s discounted bid for this allocation reaches 0 once $\pi_i = b_i(x^*)$. Larger discounts are not meaningful because they have no further effect on the discounted bid for $x^*$, so it is natural to impose $\dis_i \leq b_i(x^*)$. To avoid dependence on any particular tie-breaking scheme, the definition imposes this condition for all $x^* \in \aX^*$.

\begin{lemma} \label{lem:core-valid}
If $\dis_i \in \mbtw_i$ and $\rai_i \in \mbtl_i$ for all $i \in N$, then $\vdis$ and $\vrai$ are valid discounts and raises.     
\end{lemma}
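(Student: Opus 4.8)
The plan is to verify the two bullets of validity agent by agent, splitting on whether an agent is winning or losing. Non-negativity of $\dis_i$ and $\rai_i$ is immediate, since $\mbtw_i$ and $\mbtl_i$ consist of non-negative scalars by construction. What remains is, for each winning agent, the bound $\dis_i \le \min_{x^* \in \aX^*} b_i(x^*)$ together with $\rai_i = 0$, and for each losing agent, $\dis_i = 0$. I would establish each by contradiction, exploiting that membership in $\mbtw_i$ (respectively $\mbtl_i$) forces every currently optimal allocation to remain in the argmax after the update. Writing $\Vw(N)$ for the common optimal value and recalling that discounting $i$ by $\dis_i$ lowers the total on an allocation $x$ by exactly $\min\{b_i(x),\dis_i\}$, while raising $i$ by $\rai_i$ raises it by $\rai_i$ wherever $b_i(x)>0$, the strategy is to locate an allocation whose updated total strictly exceeds that of some currently optimal allocation.

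For the discount bound, fix a winning agent $i$ and an optimal $x_0^*$ attaining $m = \min_{x^* \in \aX^*} b_i(x^*)$, and suppose $\dis_i > m$. If some other optimal allocation $y^*$ has $b_i(y^*) > m$, then its updated total is $\Vw(N) - \min\{b_i(y^*),\dis_i\} < \Vw(N) - m$, which is the updated total of $x_0^*$; hence $y^*$ leaves the argmax, contradicting $\dis_i \in \mbtw_i$. Otherwise every optimal allocation assigns $i$ the same bid $m$, and here I would apply agent-free-disposal with $S=\{i\}$ to $x_0^*$ to obtain $x'$ with $b_i(x')=0$ and $\sum_{j\neq i} b_j(x') \ge \Vw(N)-m$; since $x'$ is unaffected by the discount, its total weakly dominates that of the optimal allocations once $\dis_i$ passes $m$, which is the mechanism that pins the discount at $m$. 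The claim $\rai_i=0$ for a winning agent is the mirror image: because $i$ wins in some optimal allocation, raising strictly lifts that allocation's total above $\Vw(N)$, so any currently optimal allocation on which $i$'s bid is not boosted—one exists whenever $i$ also loses in an optimal allocation, since there $b_i=0$ and the raised bid is forced to $0$—is ejected, forcing $\rai_i=0$.

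For a losing agent $i$, fix an optimal $x^*_\ell \notin \dom v_i$, so $b_i(x^*_\ell)=0$ and its total is insensitive to any discount of $i$. Any optimal allocation on which $i$ bids positively would have its total strictly lowered by a positive discount and thus fall below $x^*_\ell$, ejecting it from the argmax; hence $\dis_i=0$. The condition $\rai_i \ge 0$ is immediate from the definition of $\mbtl_i$.

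The main obstacle is the boundary/tie case of the discount bound, where every optimal allocation assigns $i$ the same bid: the free-disposal allocation only weakly dominates, so ruling out an over-large discount rests on the precise sense in which $\mbtw_i$ encodes ``remaining winning'' rather than merely ``remaining in the argmax.'' The same delicacy reappears for $\rai_i=0$ when $i$ wins in every optimal allocation, and additional care is needed because raised bids are not unique on allocations with $b_i(x)=0$ but $v_i(x)>0$; the argument must be run for the raised bid fixed in the definition of $\mbtl_i$.
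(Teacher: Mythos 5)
Your route is genuinely different from the paper's: you argue directly from the subset conditions defining $\mbtw_i$ and $\mbtl_i$ in~\eqref{eq:mbtwin-set-i}--\eqref{eq:mbtlose-set-i}, whereas the paper's proof is a three-line deduction from the interval characterization in Proposition~\ref{prop:vcg-characterizations}: it uses $\dis_i \leq \Vw(N)-\Vw(N-i) \leq b_i(x^*)$ for every $x^* \in \aX^*$, together with $\Vl(N)=\Vl(N-i)$ for winning agents and $\Vw(N)=\Vw(N-i)$ for losing agents. Your strict-domination steps are correct where they apply: Case~A of the discount bound, the ejection of an optimal allocation in which $i$ loses when a winning $i$ raises, and the ejection of an optimal allocation in which $i$ bids positively when a losing $i$ discounts are all sound.

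The tie case you flag is, however, a genuine gap, and it cannot be closed by a sharper version of your argument, because in that configuration the conclusion fails as literally stated. If every optimal allocation assigns $i$ the bid $m$ and the best allocation without $i$ only ties, i.e.\ $\Vw(N-i)=\Vw(N)-m$, then any $\dis_i > m$ merely zeroes $i$'s bid on the optimal allocations while no competitor strictly overtakes them, so $\aX^*(\vb) \subseteq \aX^*(\dbid_i,\vb_{-i})$ still holds and $\mbtw_i$ contains discounts exceeding $\min_{x^*\in\aX^*} b_i(x^*)$. (Two items with agent 1 bidding 10 only on the first and agent 2 bidding 6 only on the second gives such an instance: $\dis_1=12$ preserves the optimal assignment but violates $\dis_1\leq 10$.) The same degeneracy undercuts your claim that $\rai_i=0$ when $i$ wins in \emph{every} optimal allocation, and your claim that $\dis_i=0$ when $i$ bids $0$ on every optimal allocation. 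The paper avoids confronting this only by importing the upper bounds from Proposition~\ref{prop:vcg-characterizations}, whose own proof is deferred to Theorem~\ref{thm:main-result}. So to complete your argument you should either establish the upper-bound direction of Proposition~\ref{prop:vcg-characterizations} first and proceed as the paper does, or make explicit that the validity cap is a normalization imposed on the reported feedback (discounts beyond $b_i(x^*)$ have no further effect and are simply truncated) rather than a consequence of the subset condition alone.
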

\begin{proof}
By the definition of $\mbtw_i$ and $\mbtl_i$, $\vdis$ and $\vrai$ are non-negative. If $i$ is a winning agent, then $\Vl(N) = \Vl(N - i)$, which implies that $\rai_i = 0$. If $i$ is a losing agent, then $\Vw(N) = \Vw(N - i)$, which implies that $\dis_i = 0$.

Let $x^* \in \aX^*$. For any $j \in N$, we have $\Vw(N) = \sum_{i} b_i(x^*)$ and $\Vw(N - j) \geq \sum_{i \neq j} b_i(x^*)$, from which we obtain $\Vw(N) - \Vw(N-j) \leq b_j(x^*)$. Combined with the characterization of $\mbtw_j$ in Proposition~\ref{prop:vcg-characterizations}, this implies $\pi_j \leq b_j(x^*)$. As $x^*$ was an arbitrary optimal allocation, this completes the proof.
\end{proof}
%


\section{Multi-Agent Updates and the Core}
\label{sec:multi-core}

If multiple winning agents follow the feedback specified by VCG discounts simultaneously then the agents are no longer guaranteed to win, as we saw in Example~\ref{ex:zero-vcg}. In fact, in the example all the winning agents would decrease their bids to 0. In this section we develop feedback that provides guarantees for coalitions of agents, drawing on concepts from cooperative game theory.

\subsection{The Core}

We first examine feedback for winning bidders. A \emph{cooperative game} is defined by a set of players together with a value function defined over coalitions~\citep{myerson1997game}. In our setting the set of players is $N$ together with the seller, for which we use the special index $s$, and the value function is $\Vw$ defined in~\eqref{eq:winning-coalitional-value}. The \emph{core} of this cooperative game is a solution concept that distributes the optimal allocation value among the players. For an auction with a single seller, the core is the set of all non-negative payoff vectors $(\dis^s, \vdis)$ satisfying the following conditions~\citep{bikhchandani2002package}.
\begin{eqnarray}
\dis^s + \sum_{i \in N} \dis_i & = & \Vw(N) \label{eq:core-eq} \\
\dis^s + \sum_{i \in S} \dis_i & \geq & \Vw(S) \hspace{2em} \forall S \subseteq N \label{eq:core-lb}
\end{eqnarray}
Here $\dis_i$ is normally interpreted as the utility (payoff) to bidder $i$, while $\dis^s$ is the revenue to the seller. Condition~\eqref{eq:core-eq} states that the total utility to all auction participants (including the seller) should amount to the optimal allocation value. Condition~\eqref{eq:core-lb} states that the total utility to participants in $S$ and the seller should exceed the allocation value $\Vw(S)$ that these players can together realize for themselves; this is a `stability' condition to ensure that no coalition would want to defect. For coalitions that do not contain the seller, or for the seller on its own, the optimal value is simply zero, which leads to the condition that $(\dis^s, \vdis)$ should be non-negative.

The inequalities defining the core can be simplified so that the seller term $\dis^s$ drops out. We define the set
\begin{equation} \label{eq:core-ub}   
\core(N, \Vw) = \left\{ \vdis \geq 0 : \sum_{i \in S} \dis_i \leq \Vw(N) - \Vw(N \setminus S),\, \forall S \subseteq N \right\}.
\end{equation}
It is straightforward to show that a vector $\vdis \in \core(N, \Vw)$, together with $\dis^s = \Vw(N) - \sum_i \dis_i$, is in the core, and vice-versa:~\eqref{eq:core-eq} and~\eqref{eq:core-lb} imply the inequalities in~\eqref{eq:core-ub}, and the argument can be reversed.

Note that~\eqref{eq:core-ub} contains the inequalities defining $\mbtw_i$ for all $i \in N$, taking $S = \{i\}$. Therefore we have $\core(N, \Vw) \subseteq \left( \times_i \mbtw_i \right)$, and by Lemma~\ref{lem:core-valid} a vector $\vdis \in \core(N, \Vw)$ represents valid discounts. Let $\vzero$ be the $n$-dimensional zero vector. It is also the case that $(\pi_i, \vzero_{-i}) \in \core(N, \Vw)$ for $\pi_i \in \mbtw_i$. These facts correspond to the known results that VCG discounts upper-bound core payoffs, and that each agent's VCG discount is attained at some core vector~\citep{bikhchandani2002package}.

Developing analogous concepts relevant to losing bidders is now immediate: take the core $\core(N, \Vl)$ with respect to the coalitional value function~\eqref{eq:losing-coalitional-value}. All the results mentioned in the previous paragraph hold, replacing the discounts $\vdis$ with raises $\vrai$ and $\mbtw_i$ with $\mbtl_i$. (Naturally, for a VCG raise to be attained at some vector in $\core(N, \Vl)$, the raise has to be finite.) The following result follows as a corollary to our main result in the next section.
\begin{theorem} \label{thm:core-results}
We have $\vdis \in \core(N, \Vw)$ if and only if $\vdis$ are valid discounts and, for all $S \subseteq N$, every optimal allocation under the original bids remains optimal after every bidder $i \in S$ discounts its bid by $\dis_i$.
Analogously,
we have $\vrai \in \core(N, \Vl)$ if and only if $\vrai$ are valid raises and, for all $T \subseteq N$, every optimal allocation under the original bids remains optimal after every bidder $i \in T$ raises its bid by $\rai_i$.
\end{theorem}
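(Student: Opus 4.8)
The plan is to prove, for each fixed coalition, that the allocation-preservation condition is equivalent to a family of core inequalities, and then to assemble the two directions of the biconditional from this. The forward implications (that core membership already forces validity) are in hand: the text notes the inclusions $\core(N,\Vw)\subseteq\prod_i\mbtw_i$ and $\core(N,\Vl)\subseteq\prod_i\mbtl_i$, so by Lemma~\ref{lem:core-valid} a core vector is automatically valid. The real work is the equivalence between preservation and the core inequalities, and its crux is linearizing the nonlinear bid-update operators.

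I would handle the winning (discount) case first. Fix $S\subseteq N$ and write $F_S(x)=\sum_{i\in S}\max\{b_i(x)-\dis_i,0\}+\sum_{i\notin S}b_i(x)$ for the total bid value after every $i\in S$ discounts by $\dis_i$. The first step uses validity to evaluate $F_S$ on the original optima: since $\dis_i=0$ for losing agents and $\dis_i\le b_i(x^*)$ for winning agents at every $x^*\in\aX^*$, the clamp never binds on an optimal allocation, so $F_S(x^*)=\Vw(N)-\sum_{i\in S}\dis_i$ for every $x^*\in\aX^*$ — a constant. Hence ``every optimal allocation remains optimal'' is equivalent to $\Vw(N)-\sum_{i\in S}\dis_i\ge\max_x F_S(x)$. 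The second, key, step removes the clamp: for any $x$ one has $F_S(x)=\max_{T\subseteq S}\bigl[\sum_{i\in S\setminus T}(b_i(x)-\dis_i)+\sum_{i\notin S}b_i(x)\bigr]$, the maximizing $T$ being the indices where the discount overflows. Substituting and rearranging, the preservation inequality becomes: for every $T\subseteq S$ and every $x$, $\sum_{i\in N\setminus T}b_i(x)\le\Vw(N)-\sum_{i\in T}\dis_i$. Maximizing the left side over $x$ gives exactly $\Vw(N\setminus T)$, so preservation for $S$ is equivalent to $\sum_{i\in T}\dis_i\le\Vw(N)-\Vw(N\setminus T)$ for all $T\subseteq S$. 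Taking $S=N$ yields the backward direction (preservation for all $S$ forces all core inequalities), while the forward direction is immediate since a core vector satisfies these inequalities for every $T$.

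The losing (raise) case is symmetric but carries one extra wrinkle: the raised bid is not unique. Here I would work with the pointwise-largest admissible raised bid, $\rbid_i(x)=b_i(x)+\rai_i$ if $x\in\dom v_i$ and $0$ otherwise, so that establishing preservation for this worst case establishes it for every admissible raised bid (and the necessity direction may invoke this same admissible profile). Validity again pins the value on optima: a nonzero raise attaches only to a strictly losing agent, which loses on every optimal allocation, so the raised bids equal the originals on $\aX^*$ and the post-raise value there is $\Vw(N)=\Vl(N)$. Writing (after cancellation) $H_T(x)=\sum_i b_i(x)+\sum_{i\in T,\,x\in\dom v_i}\rai_i$, preservation becomes $H_T(x)\le\Vw(N)$ for all $x$. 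Grouping allocations by the winning subset $U=\{i\in T:x\in\dom v_i\}$ and using $\max\{\sum_i b_i(x):x\in\dom v_i\ \forall i\in U\}=\Vl(N\setminus U)$ converts this into $\sum_{i\in U}\rai_i\le\Vl(N)-\Vl(N\setminus U)$ for all $U\subseteq T$; taking $T=N$ closes the argument, and when the constrained maximization is infeasible the bound is vacuous, so no special case is needed.

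The main obstacle, where essentially all the content sits, is the linearization of the update operators: the clamp $\max\{b_i(x)-\dis_i,0\}$ in the discount case and the indicator $\mathbf{1}[x\in\dom v_i]$ induced by the raise both make the post-update objective nonlinear in $x$, so one cannot read off a single linear inequality directly. The decomposition $F_S=\max_{T\subseteq S}G_{S,T}$ (and the grouping by win-set $U$ on the losing side) is what reduces preservation to a finite family of linear inequalities indexed by subcoalitions, and it is exactly this passage that produces the full core constraint set rather than only the singleton constraints defining $\mbtw_i$ and $\mbtl_i$. A secondary point to get right is that validity is precisely the hypothesis needed to make the post-update value on $\aX^*$ a clean constant; without it the clamp could bite on an optimal allocation and the equivalence would break.
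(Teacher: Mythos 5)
Your proof is correct, but it reaches the result by a different route than the paper: the paper never proves Theorem~\ref{thm:core-results} directly, instead obtaining it as a corollary of the bicore characterization (Theorem~\ref{thm:main-result}) by specializing to the profiles $(\vdis,\vzero)$ and $(\vzero,\vrai)$ through Lemma~\ref{lem:relationships}, which uses monotonicity of $V(S,T)$ in $T$ to collapse the bicore constraints to the core constraints. Your central device---writing the post-discount value $F_S(x)$ as a maximum over the overflow set $T\subseteq S$ of linear functions, and grouping allocations by the win-set $U\subseteq T$ on the raise side---is essentially the same linearization the paper performs inside the proof of Theorem~\ref{thm:main-result} via its $S^+/S^-$ and $T^+/T^-$ split, but you apply it directly to the two pure cases rather than to the mixed case. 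What your route buys is a self-contained, more elementary proof of the core statements that never invokes the bicooperative machinery, and your $\max_{T\subseteq S}$ reformulation of the clamp is arguably cleaner than the paper's case split; what it costs is that the clamp analysis is done twice and the mixed winning/losing characterization is not obtained along the way. Two details to nail down in a full write-up: first, as you note, the raised bid is non-unique (Definition~\ref{def:bid-updates}), so the necessity direction for $\core(N,\Vl)$ needs an \emph{admissible} raised-bid profile attaining $\rbid_i(x)=b_i(x)+\rai_i$ on the allocation achieving $\Vl(N\setminus U)$; when $b_i(x)=0$ this requires $\rai_i\le v_i(x)$, a point the paper itself glosses over in~\eqref{eq:updated-Vst}, so your treatment is no worse than the original. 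Second, for forward validity you should derive $\dis_i\le\Vw(N)-\Vw(N-i)$ directly from the singleton core constraint rather than routing through Proposition~\ref{prop:vcg-characterizations}, since that proposition is itself stated as a corollary of the main theorem.
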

To gain some intuition for one direction of this result, let $x^*$ be an optimal allocation and let $T$ be a coalition of bidders that lose in $x^*$. Let $x \in \aX$ be an allocation that maximizes the bid value subject to the constraint that all $i \in T$ are winning. The bid value of $x$ is $\Vl(N \setminus T)$ by definition. If each $i \in T$ raises its bid by $\rai_i$, the value of $x$ increases to $\Vl(N \setminus T) + \sum_{i \in T} \rai_i$, while the value of $x^*$ remains at $\Vl(N)$. To ensure that $x^*$ remains optimal, we must have $\Vl(N \setminus T) + \sum_{i \in T} \rai_i \leq \Vl(N)$, which is one of the defining constraints for $\core(N, \Vl)$.

\subsection{The Bicore}

The previous section examined joint bid updates by coalitions of winning bidders, or coalitions of losing bidders, but not the most general case of bid updates by winning and losing bidders in tandem, which we now cover here. To state our main characterization we result make use of the concept of a \emph{bicooperative game}, introduced by~\citet{bilbao2000cooperative}. In a bicooperative game, the value function is defined over pairs of coalitions $(S, T)$ where $S,T \subseteq N$ and $S \cap T = \emptyset$. The interpretation is that $S$ is a coalition of `cooperators' whereas $T$ is a coalition of `detractors'.

We define a bicooperative game where the players are again the agents $N$ together with the seller, and the coalitional value function is defined as follows:
\begin{equation} \label{eq:bicoalitional-value}
V(S,T) = \max_{x \in \aX} \left\{\sum_{i \in S \cup T} b_i(x) : x \in \dom v_i,\: \forall i \in T \right\}.
\end{equation}
In words, $V(S,T)$ represents the maximum bid value that can be collectively achieved for agents $S \cup T$, subject to the constraint that all agents in $T$ are winning. If there is no allocation that satisfies this constraint, then we set $V(S,T) = -\infty$.

For bicooperative games based on our model where there is a single seller, we propose the following concept of the core, which we call the `bicore'.

\medskip
\begin{definition} \label{def:bi-core}
The \emph{bicore} is the set of all non-negative payoff vectors $(\dis^s, \vdis, \vrai)$, where $\vdis = (\dis_1, \dots, \dis_n)$ and $\vrai = (\rai_1, \dots, \rai_n)$, satisfying the following conditions.
\begin{eqnarray}
\dis^s + \sum_{i \in N} \dis_i & = & V(N, \emptyset) \label{eq:bicore-eq} \\
\dis^s + \sum_{i \in S} \dis_i - \sum_{j \in T} \rai_j & \geq & V(S, T) \hspace{2em} \forall S,T \subseteq N \label{eq:bicore-lb}
\end{eqnarray}
\end{definition}
Here the $\pi_i$ have their usual interpretations as bidders utilities, while $\pi^s$ corresponds to the seller's revenue.
%
%
The bicore includes a new term $\mu_i$ for each $i \in N$ which can be interpreted as the \emph{disutility} of having to include $i$ in the winning coalition. Condition~\eqref{eq:bicore-lb} then states that the total utility to bidders in $S$ (along with the seller), net of the disutility from the constraint that bidders in $T$ must win, should still exceed the constrained coalitional value $V(S,T)$.

As with the core, the seller term $\pi^s$ can be made implicit, leading to a more informative formulation for our purposes. A pair of discount and raise vectors $(\vdis, \vrai)$ is in the bicore if and only if it lies in the following set:
\begin{equation} \label{eq:bicore-ub}
\bicore(N, V) = \left\{ (\vdis, \vrai) \geq 0 : \sum_{i \in S} \dis_i + \sum_{j \in T} \rai_j \leq V(N, \emptyset) - V(N \setminus S, T),\, \forall S,T \subseteq N \right\}.
\end{equation}
Note that according to~\eqref{eq:bicoalitional-value} we have $V(S,T) = V(S \setminus T, T)$. It therefore suffices to specify $V(S,T)$ only for coalitions that satisfy $S \cap T = \emptyset$, in agreement with the definition of a bicooperative game given by~\citet{bilbao2000cooperative}. It is also straightforward to check that any of the constraints~\eqref{eq:bicore-lb} beyond those for which $S \cap T = \emptyset$ are redundant. Nonetheless, in proving our results, we have sometimes found it more straightforward prove that~\eqref{eq:bicore-lb} or~\eqref{eq:bicore-ub} hold for all $S,T \subseteq N$.
\begin{remark}
Our Definition~\ref{def:bi-core} for the bicore bears strong resemblance to Definition 2 given by~\citet{bilbao2007core}, but they are not quite equivalent. The definitions share constraints~\eqref{eq:bicore-ub}, but instead of constraint~\eqref{eq:bicore-eq} \citet{bilbao2007core} impose the following equality:
\begin{equation} \label{eq:bilbao-preimputation}
\sum_{i \in N} \dis_i + \sum_{i \in N} \rai_i = V(N, \emptyset) - V(\emptyset, N).    
\end{equation}
This corresponds to a different notion of \emph{preimputations}, which are payoff vectors that distribute the optimal allocation value among the players. The left-hand sides of~\eqref{eq:bicore-eq} and~\eqref{eq:bilbao-preimputation} do not match, and in our setting it is not unusual to have $V(\emptyset, N) = -\infty$, which makes~\eqref{eq:bilbao-preimputation} meaningless.
\end{remark}

\subsection{Main Characterization}

We are now ready to prove our main result.
\begin{theorem} \label{thm:main-result}
Let $\vdis$ and $\vrai$ be vectors of discounts and raises, respectively. We have $(\vdis, \vrai) \in \bicore(N, V)$ if and only if $\vdis$ and $\vrai$ are valid discounts and raises, and for all coalitions $S,T \subseteq N$, every optimal allocation under the original bids remains optimal after every bidder $i \in S$ discounts its bid by $\dis_i$ and every bidder $i \in T$ raises its bid by $\rai_i$.
\end{theorem}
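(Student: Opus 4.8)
The plan is to prove both implications by comparing, under the updated bids, the total value of an optimal allocation $x^* \in \aX^*(\vb)$ against that of an arbitrary competing allocation. Throughout I fix coalitions $S,T \subseteq N$; by the redundancy of the overlapping constraints noted after~\eqref{eq:bicore-ub}, and because validity forces $\dis_i = 0$ (resp.\ $\rai_i = 0$) whenever $i$ is losing (resp.\ winning), I may assume $S \cap T = \emptyset$, so that $S$ collects discounting (winning) agents and $T$ collects raising (losing) agents. Write $b_i'$ for the bid of $i$ after the update: $b_i' = \dbid_i$ for $i \in S$, $b_i' = \rbid_i$ for $i \in T$, and $b_i' = b_i$ otherwise. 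A losing agent with $\rai_i > 0$ loses in \emph{every} optimal allocation, so $x^* \notin \dom v_i$ and $b_i'(x^*) = b_i(x^*) = 0$; combined with $\dis_i \le b_i(x^*)$ for winning $i$ (validity), this yields the basic identity $\sum_i b_i'(x^*) = V(N,\emptyset) - \sum_{i \in S} \dis_i$. Since a raised bid is not unique when $b_i$ vanishes, I read ``$i \in T$ raises by $\rai_i$'' as a statement about \emph{every} valid raised bid with parameter $\rai_i$; the upper estimate $\rbid_i(x) \le b_i(x) + \rai_i$ holds for all of them, while the maximal choice attains equality, and I use each fact in the direction where it is needed.

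For the implication (valid and optimality-preserving) $\Rightarrow$ bicore, I would fix $S,T$ and test the preservation hypothesis at a single witness allocation. If $V(N \setminus S, T) = -\infty$ the constraint in~\eqref{eq:bicore-ub} is vacuous, so assume feasibility and let $z$ attain $V(N \setminus S, T)$, meaning $z \in \dom v_i$ for all $i \in T$ and $\sum_{i \in N \setminus S} b_i(z) = V(N \setminus S, T)$. Choosing the maximal raised bids gives $b_i'(z) = b_i(z) + \rai_i$ for $i \in T$, and discarding the non-negative $S$-terms yields the lower bound $\sum_i b_i'(z) \ge V(N \setminus S, T) + \sum_{i \in T} \rai_i$. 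Since $x^*$ must remain optimal, $\sum_i b_i'(z) \le \sum_i b_i'(x^*) = V(N,\emptyset) - \sum_{i \in S} \dis_i$, and combining the two inequalities rearranges exactly into the bicore constraint for $(S,T)$.

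For the converse, bicore $\Rightarrow$ (valid and optimality-preserving), I would first extract validity from the singleton constraints: $S = \{i\}, T = \emptyset$ gives $\dis_i \le V(N,\emptyset) - V(N-i,\emptyset)$, which forces $\dis_i = 0$ when $i$ is losing and $\dis_i \le b_i(x^*)$ in general, while $S = \emptyset, T = \{i\}$ gives $\rai_i \le V(N,\emptyset) - V(N,\{i\})$, which forces $\rai_i = 0$ when $i$ is winning. With validity in hand the identity for $\sum_i b_i'(x^*)$ holds, so preservation reduces to showing $\sum_i b_i'(y) \le V(N,\emptyset) - \sum_{i \in S} \dis_i$ for every $y \in \aX$. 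The crux is to select the right sub-coalition: set $S'' = \{i \in S : b_i(y) < \dis_i\}$ (the discounts that zero out on $y$) and $T_y = \{i \in T : y \in \dom v_i\}$ (the raised agents that actually win in $y$). Then $y$ is feasible for $V(N \setminus S'', T_y)$, so $V(N \setminus S'', T_y) \ge \sum_{i \in N \setminus S''} b_i(y)$; accounting for the surviving discounted terms and the at-most-$\rai_i$ contributions on $T_y$ (using $b_i(y) = 0$ off $\dom v_i$) turns this into $\sum_i b_i'(y) \le V(N \setminus S'', T_y) - \sum_{i \in S''} \dis_i + \sum_{i \in T_y} \rai_i$. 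Applying the bicore constraint for $(S'', T_y)$ and using $S = S'' \sqcup (S \setminus S'')$ makes the raise terms cancel and reassembles $\sum_{i \in S} \dis_i$, giving the desired bound.

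I expect the forward implication (bicore $\Rightarrow$ preservation) to be the main obstacle. Unlike the converse, which only probes one witness allocation, it must hold against \emph{every} competing $y$, and the two nonlinearities in the update rule both bite here: the $\max\{\cdot,0\}$ truncation in discounting means the ``active'' discount coalition depends on $y$, and the freedom in raised bids means the estimate must be uniform over all admissible $\rbid_i$. Getting this right hinges on matching each $y$ to the sub-coalition $(S'', T_y)$ so that the correct bicore inequality applies; once that correspondence is in place, the remaining steps are bookkeeping. A secondary point to handle carefully is the consistent reading of non-unique raised bids across the two directions, together with the reduction to $S \cap T = \emptyset$.
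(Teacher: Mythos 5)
Your proof follows essentially the same route as the paper's: the forward direction tests preservation at an optimizer of $V(N\setminus S,T)$ and compares against $x^*$, and the converse matches each competing allocation $y$ to the pair $(S'',T_y)$ --- exactly the paper's $(S^-,T^+)$ --- and applies the corresponding bicore constraint. The only issue is a notational slip in your intermediate bound, where $-\sum_{i\in S''}\dis_i$ should read $-\sum_{i\in S\setminus S''}\dis_i$ (the surviving discounts), but your subsequent reassembly via $S=S''\sqcup(S\setminus S'')$ shows you intend the correct computation.
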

\begin{proof}
$(\Leftarrow)$ Let $x^* \in \aX^*$ be an optimal allocation with respect to the original bids. As $\vdis$ and $\vrai$ are valid discounts and raises, they are non-negative. As $\vdis$ are valid discounts, we have $\dis_i \leq b_i(x^*)$ for all $i \in N$, which implies $\sum_i \dis_i \leq \sum_i b_i(x^*) = V(N, \emptyset)$. Therefore, $\dis^s = V(N, \emptyset) - \sum_i \dis_i \geq 0$, and the vector $(\dis^s, \vdis, \vrai)$ has non-negative components. Equality~\eqref{eq:bicore-eq} holds by construction. It remains to establish inequalities~\eqref{eq:bicore-ub}.

Let $S, T \subseteq N$, and assume that every bidder $i \in S$ discounts its bid by $\dis_i$, and every bidder $i \in T$ raises its bid by $\rai_i$. We may assume that $\dis_i > 0$ for each $i \in S$ and $\rai_i > 0$ for each $i \in T$, because for the other bidders the updates do not change their bids. As $\vdis$ and $\vrai$ are valid, this implies that $S \cap T = \emptyset$, and that every $i \in S$ wins in every optimal allocation while every $i \in T$ loses in every optimal allocation. For each $i \in S$ we have $\dbid_i(x^*) = b_i(x^*) - \pi_i$ as $\pi_i \leq b_i(x^*)$, because $\pi_i$ is a valid discount. For each $i \in T$ we have $\rbid_i(x^*) = b_i(x^*) = 0$ as $x^* \not\in \dom v_i$, because $i$ loses in $x^*$.
The bid value of $x^*$ under the updated bids is therefore:
\begin{equation} \label{eq:updated-Vn}
\sum_{i \in S} \dbid_i(x^*) + \sum_{i \in T} \rbid_i(x^*) + \sum_{i \not\in S \cup T} b_i(x^*) 
= \sum_{i \in N} b_i(x_i^*) - \sum_{i \in S} \pi_i = V(N, \emptyset) - \sum_{i \in S} \pi_i.
\end{equation}

Now let $x$ be an allocation that satisfies the constraints and achieves the maximum in the optimization for $V(N \setminus S, T)$. By the agent-free-disposal property we can assume that $x \not\in \dom v_i$ for all $i \not\in (N \setminus S) \cup T = N \setminus S$, so that $b_i(x) = 0$ for these agents. Noting that $(N \setminus S) \setminus T = N \setminus (S \cup T)$, and that the bids from these agents do not change, the total bid value of $x$ under the updated bids is therefore:
\begin{equation} \label{eq:updated-Vst}
\sum_{i \in (N \setminus S) \setminus T} b_i(x) + \sum_{j \in T} \rbid_j(x) + \sum_{i \not\in N \setminus S} b_i(x) = \sum_{i \in (N \setminus S) \cup T} b_i(x) + \sum_{i \in T} \rai_i = V(N \setminus S, T) + \sum_{i \in T} \rai_i
\end{equation}
As $x^*$ remains optimal under the updated bids, we have that~\eqref{eq:updated-Vn} is at least~\eqref{eq:updated-Vst}, which proves that~\eqref{eq:bicore-ub} holds. The vector $(\dis^s, \vdis, \vrai)$ is therefore in the bicore.

\medskip
$(\Rightarrow)$ Let $S, T \subseteq N$, and assume that every bidder $i \in S$ discounts its bid by $\dis_i$, and every bidder $i \in T$ raises its bid by $\rai_i$. Again, we may assume that $\dis_i > 0$ for each $i \in S$ and $\rai_i > 0$ for each $i \in T$, because for the other bidders the updates do not change their bids. By Lemma~\ref{lem:core-valid}, $(\vdis, \vrai)$ are valid discounts and raises, so we must have $S \cap T = \emptyset$.

 Let $x^* \in \aX^*$ be an optimal allocation with respect to the original bids. For $i \in T$, as $\rai_i > 0$ and $\vrai$ is valid, $i$ is not winning and $x^* \not\in \dom v_i$. Therefore $\rbid_i(x^*) = b_i(x^*) = 0$. For $i \not\in S \cup T$, the bids of these agents do not change: either $\pi_i = \mu_i = 0$, or $i$ is not an agent that updates its bid. For $i \in S$, as $\vdis$ is valid we have $b_i(x^*) \geq \pi_i$, which implies $\dbid_i(x^*) = b_i(x^*) - \pi_i$. The value of $x^*$ after the bid updates is therefore:
\begin{equation} \label{eq:discounted-opt-value}
    \sum_{i \in S} \dbid_i(x^*) + \sum_{i \in T} \rbid_i(x^*) + \sum_{i \not\in S \cup T} b_i(x^*)
    = \sum_{i \in N} b_i(x^*) - \sum_{i \in S} \pi_i = V(N, \emptyset) - \sum_{i \in S} \pi_i.
\end{equation}
The last equality follows from the fact that $x^* \in \aX^*$.

Now let $x \in \aX$. Let $S^+ = \{i \in S : b_i(x) \geq \pi_i\}$ and let $S^- = S \setminus S^+$. Note that $\dbid_i(x) = 0$ for $i \in S^-$. We have
\begin{equation} \label{eq:discounted-bids}
\sum_{i \in S} \dbid_i(x) = \sum_{i \in S^+} \dbid_i(x) = \sum_{i \in S^+} b_i(x) - \sum_{i \in S^+} \dis_i.    
\end{equation}
Let $T^+ = \{i \in T : x \in \dom v_i\}$ and let $T^- = T \setminus T^+$. We have $\rbid_i(x) \leq b_i(x) + \mu_i$ for $i \in T^+$ and $\rbid_i(x) = b_i(x) = 0$ for $i \in T^-$. Therefore we have
\begin{equation} \label{eq:raise-bids}
\sum_{i \in T} \rbid_i(x) \leq \sum_{i \in T} b_i(x) + \sum_{i \in T^+} \mu_i.    
\end{equation}
From~\eqref{eq:discounted-bids} and~\eqref{eq:raise-bids} we obtain that the total bid value of $x$ after the bid updates satisfies the following (recall that $S \cap T = \emptyset$):
\begin{eqnarray*}
& \ds \sum_{i \in N \setminus (S \cup T)} b_i(x) + \sum_{i \in S} \dbid_i(x) + \sum_{i \in T} \rbid_i(x) & \\
\leq & \ds \sum_{i \in N \setminus (S \cup T)} b_i(x) + \sum_{i \in S^+} b_i(x) - \sum_{i \in S^+} \dis_i + \sum_{i \in T} b_i(x) + \sum_{i \in T^+} \mu_i & \\
= & \ds \sum_{i \in N \setminus S^-} b_i(x) - \sum_{i \in S^+} \dis_i + \sum_{i \in T^+} \mu_i &
\end{eqnarray*}
By definition, $x \in \dom v_i$ for $i \in T^+$, and we have $T^+ \subseteq N \setminus S^-$ since $S \cap T = \emptyset$. Therefore, $\sum_{i \in N \setminus S^-} b_i(x) \leq V(N \setminus S^-, T^+)$. Applying this inequality to the above, we continue:
\begin{eqnarray*}
& \ds \sum_{i \in N \setminus (S \cup T)} b_i(x) + \sum_{i \in S} \dbid_i(x) + \sum_{i \in T} \rbid_i(x) & \\
\leq & \ds V(N \setminus S^-, T^+) - \sum_{i \in S^+} \dis_i + \sum_{i \in T^+} \mu_i & \\
\leq & \ds V(N, \emptyset) - \sum_{i \in S^-} \dis_i - \sum_{i \in T^+} \rai_i - \sum_{i \in S^+} \dis_i + \sum_{i \in T^+} \mu_i & \\
= & \ds V(N, \emptyset) - \sum_{i \in S} \pi_i & \\
= & \ds \sum_{i \in N \setminus (S \cup T)} b_i(x^*) + \sum_{i \in S} \dbid_i(x^*) + \sum_{i \in T} \rbid_i(x^*) &
\end{eqnarray*}
Here the second inequality corresponds to core constraint~\eqref{eq:bicore-ub} for the pair of sets $(S^-, T^+)$, and the final equality was shown in~\eqref{eq:discounted-opt-value}. As $x^*$ was an arbitrary optimal allocation (with respect to the original bids), and $x$ was an arbitrary allocation, this shows that every optimal allocation remains optimal after the bid updates using discounts and raises from the bicore. This completes the proof.
\end{proof}
%

\paragraph{Corollaries}

Note that $\Vw(S) = V(S, \emptyset)$ and $\Vl(T) = V(N, N \setminus T)$. By the relationships in the following lemma, Proposition~\ref{prop:vcg-characterizations} and Theorem~\ref{thm:core-results} follow as corollaries from Theorem~\ref{thm:main-result}.
\begin{lemma} \label{lem:relationships}
The following relationships hold.
\medskip
    \begin{itemize}[wide=0.5em, leftmargin =*, nosep]
        \item $(\vdis, \vzero) \in \bicore(N, V) \Leftrightarrow$ $\vdis \in \core(N, \Vw)$.
        \item $(\vzero, \vrai) \in \bicore(N, V) \Leftrightarrow$ $\vrai \in \core(N, \Vl)$.
        \item $(\dis_i, \vzero_{-i}) \in \core(N, \Vw) \Leftrightarrow$ $\dis_i \in \mbtw_i$.
        \item $(\rai_i, \vzero_{-i}) \in \core(N, \Vl) \Leftrightarrow$ $\rai_i \in \mbtl_i$.
    \end{itemize}
\end{lemma}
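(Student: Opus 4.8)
The plan is to establish the four equivalences in order, treating the last two as specializations of the first two. The first two are statements purely about the linear inequality systems defining $\bicore(N,V)$, $\core(N,\Vw)$, and $\core(N,\Vl)$, so I would prove them by directly comparing constraints after substituting the zero vector, using the identities $\Vw(S) = V(S,\emptyset)$ and $\Vl(T) = V(N, N\setminus T)$ recorded just before the lemma.

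For the first bullet, setting $\vrai = \vzero$ turns the bicore constraints~\eqref{eq:bicore-ub} into $\sum_{i \in S}\dis_i \le V(N,\emptyset) - V(N\setminus S, T)$ for all $S,T$. One direction is immediate: restricting to $T = \emptyset$ recovers exactly the inequalities $\sum_{i\in S}\dis_i \le \Vw(N) - \Vw(N\setminus S)$ defining $\core(N,\Vw)$, so $(\vdis,\vzero)\in\bicore(N,V)$ forces $\vdis\in\core(N,\Vw)$. For the converse I would invoke the earlier remark that only constraints with $S\cap T = \emptyset$ need to be checked; for such a pair $T \subseteq N\setminus S$, hence $(N\setminus S)\cup T = N\setminus S$, and imposing the winning constraint on $T$ can only lower the maximum, giving $V(N\setminus S, T)\le V(N\setminus S,\emptyset) = \Vw(N\setminus S)$. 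The core inequality for $S$ then dominates the corresponding bicore inequality, closing the equivalence; non-negativity transfers trivially since the zeroed coordinates impose nothing.

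The second bullet is symmetric but with the two arguments of $V$ playing swapped roles: setting $\vdis = \vzero$ and taking the empty cooperator set ($N\setminus S = N$) recovers the $\core(N,\Vl)$ inequalities $\sum_{j\in T}\rai_j \le V(N,\emptyset) - V(N, T)$, while the converse needs the monotonicity $V(N\setminus S, T)\le V(N, T)$, which holds because enlarging the summed coalition from $(N\setminus S)\cup T$ to $N$ under the same feasibility constraint can only increase the objective (bids are non-negative). I expect this monotonicity-plus-redundancy step to be the main obstacle — not because it is deep, but because one must track carefully which argument of $V$ is being varied and confirm that the ``extra'' bicore constraints are genuinely implied by the core family rather than being new restrictions.

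For the last two bullets I would specialize to a single nonzero coordinate. Although these follow from the first bullet combined with Theorem~\ref{thm:main-result}, it is cleaner to observe directly that for a one-hot vector $(\dis_i,\vzero_{-i})$ the core system collapses: every constraint with $i\notin S$ reads $0 \le \Vw(N)-\Vw(N\setminus S)$ and holds automatically by monotonicity of $\Vw$, while among the constraints with $i\in S$ the tightest is $S=\{i\}$, namely $\dis_i \le \Vw(N)-\Vw(N-i)$. Together with $\dis_i\ge 0$ this is precisely the interval characterization of $\mbtw_i$ from Proposition~\ref{prop:vcg-characterizations}, giving the third bullet; the fourth follows identically with $\Vl$, $\rai_i$, and $\mbtl_i$ replacing $\Vw$, $\dis_i$, and $\mbtw_i$.
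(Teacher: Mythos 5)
Your treatment of the first two bullets is correct and follows essentially the same route as the paper: substitute the zero vector, observe that the $T=\emptyset$ (resp.\ $S=\emptyset$) slice of the bicore constraints is exactly the core system, and close the converse by monotonicity of $V$. In fact you are more careful than the paper here: the paper's blanket claim that $V$ is non-increasing in its second argument fails for non-disjoint pairs (constraining an agent in $S$ to win can add its bid to the objective), and your explicit restriction to $S\cap T=\emptyset$ via the redundancy remark, where $T\subseteq N\setminus S$ makes the monotonicity genuine, is the right way to make the step airtight. The monotonicity of $V$ in its first argument that you use for the second bullet (enlarging the summed coalition under an unchanged feasibility constraint) is also correct.

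The last two bullets have a circularity problem as written. Your ``direct'' collapse of the one-hot core system correctly shows that $(\dis_i,\vzero_{-i})\in\core(N,\Vw)$ if and only if $0\le\dis_i\le\Vw(N)-\Vw(N-i)$, but identifying that interval with $\mbtw_i$ \emph{is} Proposition~\ref{prop:vcg-characterizations}, and the paper derives Proposition~\ref{prop:vcg-characterizations} as a corollary of Theorem~\ref{thm:main-result} \emph{via this lemma}, so you cannot invoke it here. Recall that $\mbtw_i$ is defined in~\eqref{eq:mbtwin-set-i} by an allocation-preservation property, not by an interval; bridging from the polyhedral description to that property is precisely the content you are not entitled to assume. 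The route you mention and set aside as less clean --- chaining the first bullet with Theorem~\ref{thm:main-result} to get that one-hot bicore membership is equivalent to validity plus preservation of every optimal allocation under agent $i$'s lone update, which is $\dis_i\in\mbtw_i$ --- is in fact the necessary one. Your interval collapse then becomes the proof of Proposition~\ref{prop:vcg-characterizations} rather than of the lemma's third and fourth bullets.
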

\begin{proof}
A vector $(\vdis, \vzero) \in \bicore(N, V)$ satisfies $\sum_{i \in S} \pi_i \leq V(N, \emptyset) - V(N \setminus S, T)$ for all $S,T \subseteq N$. As $V$ is non-increasing in its second argument, this is equivalent to satisfying $\sum_{i \in S} \pi_i \leq V(N, \emptyset) - V(N \setminus S, \emptyset) = \Vw(N) - \Vw(N \setminus S)$ for all $S \subseteq N$, which are the constraints defining $\core(N, \Vw)$. The other relationships follow from analogous arguments.
\end{proof}

\subsection{Feedback Policies}

The core and bicore characterize joint bid updates that maintain the optimality of currently-optimal solutions. The analog of the minimum-bid-to-win (or maximum-bid-to-lose) from this perspective is to report \emph{maximal} elements from the core and bicore. Formally, we define the following three feedback policies.
\begin{description}
\item[VCG] The seller sends the VCG discount $\Vw(N) - \Vw(N-i)$ or the VCG raise $\Vl(N) - \Vl(N-i)$ to each agent $i \in N$ depending on whether it is winning or losing.
\item[Core] The seller sends discounts chosen from $\core(\Vw, N)$ that maximize $\sum_i \dis_i$, and raises chosen from $\core(\Vl, N)$ that maximize $\sum_i \rai_i$.
\item[Bicore] The seller sends discounts and raises chosen from $\bicore(V, N)$ that maximize $\sum_i \dis_i + \sum_i \rai_i$.
\end{description}
\medskip
Note that only VCG feedback prescribes unique discounts and raises. Maximal elements in the core and bicore are typically not unique. Let us return to the examples of Section~\ref{ex:vcg-examples} to see what kind of feedback comes out of the core and bicore policies.
\begin{example} {\em (Single-item auction)}
\ In the single-item auction of Example~\ref{ex:single-item-auction}, core feedback is equivalent to VCG feedback, but the bicore introduces some additional constraints. Suppose winning agent 1 and losing agent $i > 1$ both update their bids simultaneously. To ensure that agent 1 remains winning, the bicore imposes the constraint $\dis_1 + \rai_i \leq b_1 - b_i$, or equivalently $b_i + \mu_i \leq b_1 - \dis_1$. This shows that the current minimum-bid-to-win feedback policy used in display advertising is not a special case of bicore feedback.
\end{example}
\begin{example}  {\em (Zero-VCG)} \label{ex:zero-vcg-revisited}
\ In the video pod auction of Example~\ref{ex:zero-vcg}, core feedback would prescribe a raise of $\mu_1 = 10$ to the first agent, equivalent to its VCG raise. For the other two winning agents, however, it would prescribe discounts that satisfy $\dis_2 + \dis_3 \leq 10$. Therefore, it is no longer feasible for them to both apply their VCG discounts of 10 each. If we break ties in favor of the most uniform discount vector, core feedback would lead to $\pi_2 = \pi_3 = 5$.
Bicore feedback imposes even more constraints. To ensure that the current allocation remains optimal, the joint bid updates must satisfy $\rai_1 + \dis_2 + \dis_3 \leq 10$. Therefore, if we send raise $\mu_1 = 10$ to agent 1, we must send discounts $\pi_2 = \pi_3 = 0$ to agents 2 and 3.
\end{example}


\section{Assignment Problem}
\label{sec:assignemnt-problem}

In this section we consider the assignment problem, which is a special case of video pod auctions where agents have position-specific values, but the duration constraint is not binding and there are no exclusion constraints.
We give a linear programming characterization of the bicore in the assignment problem and show that it forms a lattice, extending classic results of~\citet{shapley1971assignment}. \citet{leonard1983elicitation} has shown that the largest element of this lattice corresponds to the VCG discounts for all agents at once. We show that the smallest element corresponds to the agents' VCG raises. Our treatment relies on several results presented in~\cite[Chapter 7.4]{vohra2004advanced}.

In the assignment problem the seller has a set of indivisible items $M$ to allocate to the agents. Let $m = M$. The bidding language is that each agent $i$ places a bid vector $\vb_i = (b_{i1}, \dots, b_{im})$ specifying bids for each item. Each agent can obtain at most one item, and each item goes to at most one agent. We have $m \leq n$ and assume without loss of generality that the seller always allocates all the items.

For $S \subseteq N$, let $\vd^S$ be the binary indicator for the set, where $d^S_i = 1$ if $i \in S$ and 0 otherwise. Let $S, T \subseteq N$. The following parametrized linear program evaluates $V(S,T)$ by definition.\footnote{By standard results on network flows, the feasible set has integer optimal solutions because the constraint matrix is totally unimodular; see e.g.~\cite[Chapter 4]{vohra2004advanced}.} 
\begin{eqnarray*}
\max_{\vx \geq 0} & \ds \sum_{i \in N} \sum_{j \in M} b_{ij} x_{ij} & \\
\mbox{s.t.} & \ds d_i^T \leq \sum_{j \in M} x_{ij} \leq d_i^{S \cup T} & \forall i \in N \\
 & \ds \sum_{i \in N} x_{ij} \leq 1 & \forall j \in M
\end{eqnarray*}
The dual linear program is as follows.
\begin{eqnarray*}
\min_{\vdis \geq 0, \vrai \geq 0, \vp \geq 0} & \ds \sum_{j \in M} p_j + \sum_{i \in N} d_i^{S \cup T} \dis_i - \sum_{i \in N} d_i^T \rai_i & \\
\mbox{s.t.} & \ds \dis_i - \rai_i \geq b_{ij} - p_j & \forall i \in N, \forall j \in M
\end{eqnarray*}
We refer to this dual as $D(S, T)$. Note that $D(N, \emptyset)$ is the dual to the allocation problem actually solved by the seller, defining the optimal allocation. Also note that the parametrization does not change the set of feasible dual solutions, only its objective.

\begin{theorem} \label{thm:bicore-lp}
If $(\vdis^*, \vrai^*, \vp^*)$ is an optimal solution to the dual $D(N, \emptyset)$, then $(\vdis^*, \vrai^*) \in \bicore(N, V)$. Conversely, if $(\vdis, \vrai) \in \bicore(N, V)$, then there is a vector $\vp \in \reals^m$ such that $(\vdis, \vrai, \vp)$ is an optimal solution to the dual $D(N, \emptyset)$.
\end{theorem}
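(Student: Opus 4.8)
The plan is to prove the two inclusions separately, both by linear-programming duality, exploiting the observation (already noted in the text) that the feasible region of the parametrized dual $D(S,T)$ does not depend on $(S,T)$ — only its objective does. For the first direction, let $(\vdis^*,\vrai^*,\vp^*)$ be optimal for $D(N,\emptyset)$. Because the feasible region is invariant, this point is still feasible for $D(N\setminus S,T)$ for any disjoint $S,T\subseteq N$, so weak duality gives $V(N\setminus S,T)\le \sum_{j}p^*_j+\sum_i d_i^{(N\setminus S)\cup T}\dis^*_i-\sum_i d_i^{T}\rai^*_i$. Using $(N\setminus S)\cup T=N\setminus S$ for disjoint $S,T$, the right-hand side equals $\big(\sum_j p^*_j+\sum_i \dis^*_i\big)-\sum_{i\in S}\dis^*_i-\sum_{i\in T}\rai^*_i$, and strong duality for $D(N,\emptyset)$ identifies the bracketed term with $V(N,\emptyset)$. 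Rearranging yields precisely the defining inequality of $\bicore(N,V)$; non-negativity is immediate from dual feasibility, and the constraints with $S\cap T\neq\emptyset$ are redundant. Hence $(\vdis^*,\vrai^*)\in\bicore(N,V)$.

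For the converse I would construct the prices by hand. Fix an optimal assignment $x^*$ of $V(N,\emptyset)$; since all items are sold, it induces a bijection $\sigma$ from the agents it allocates onto $M$, and I set $p_j:=b_{\sigma^{-1}(j),\,j}-\dis_{\sigma^{-1}(j)}$ for each $j\in M$. Given $(\vdis,\vrai)\in\bicore(N,V)$, validity (Lemma~\ref{lem:core-valid}) gives $\dis_i\le b_i(x^*)$ for this $x^*$, which makes every $p_j\ge 0$ and forces $\dis_i=0$ for every agent unallocated in $x^*$; summing the price definition over $M$ and using $\sum_i b_{i\sigma(i)}=V(N,\emptyset)$ then gives the objective identity $\sum_j p_j+\sum_i\dis_i=V(N,\emptyset)$. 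By strong duality, once I also verify dual feasibility, the triple $(\vdis,\vrai,\vp)$ is optimal for $D(N,\emptyset)$ and the proof is complete. So everything reduces to checking $p_j\ge b_{i'j}-\dis_{i'}+\rai_{i'}$ for every agent $i'$ and item $j$.

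This feasibility check is the heart of the argument and the step I expect to be the main obstacle. Writing $j=\sigma(i)$, it reads $b_{ij}-\dis_i\ge b_{i'j}-\dis_{i'}+\rai_{i'}$. When $\rai_{i'}>0$, validity forces $i'$ to be (purely) losing and thus to contribute nothing to $x^*$; removing $i$ then frees item $j$ for $i'$, witnessing $V(N\setminus\{i\},\{i'\})\ge V(N,\emptyset)-b_{ij}+b_{i'j}$, so the bicore constraint for the pair $(\{i\},\{i'\})$ gives $\dis_i+\rai_{i'}\le V(N,\emptyset)-V(N\setminus\{i\},\{i'\})\le b_{ij}-b_{i'j}$, which is the desired inequality (here $\dis_{i'}=0$). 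When $\rai_{i'}=0$, the inequality reduces to $b_{ij}-\dis_i\ge b_{i'j}-\dis_{i'}$, i.e.\ the pairwise-stability (competitive-equilibrium) condition of the classical assignment game. This is the genuinely hard case: it cannot be read off a single bicore constraint, since those bound only positively-signed sums of discounts whereas this condition involves the difference $\dis_i-\dis_{i'}$. The standard resolution is the \citet{shapley1971assignment} alternating-path argument, which traces a chain through $x^*$ to assemble one coalition whose core inequality delivers the bound; since $(\vdis,\vzero)\in\bicore(N,V)\Leftrightarrow \vdis\in\core(N,\Vw)$ by Lemma~\ref{lem:relationships}, I would either reproduce that argument or invoke the classical fact that core payoffs of the assignment game lift to optimal dual prices (see \citealp{shapley1971assignment}; \citealp[Chapter~7.4]{vohra2004advanced}).

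Two features make the construction hang together and are worth isolating. First, the raises $\rai_i$ appear in the dual constraints but have coefficient $d_i^{\emptyset}=0$ in the $D(N,\emptyset)$ objective, so on the optimal face they may grow freely until some constraint $p_j\ge b_{i'j}+\rai_{i'}$ binds; for a losing agent this binding slack is exactly $\min_j(p_j-b_{i'j})$, its VCG raise, and the swap argument above is what certifies it. Second, the winning block of the feasibility system collapses, via Lemma~\ref{lem:relationships}, to the classical theory of \citet{shapley1971assignment} and \citet{leonard1983elicitation}, so the only genuinely new work lies in the losing-agent raises and the mixed winner--loser pairs. Combining the two directions shows that $\bicore(N,V)$ coincides with the projection of the optimal face of $D(N,\emptyset)$ onto the $(\vdis,\vrai)$ coordinates, which is the assertion of the theorem.
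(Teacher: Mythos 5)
Your proposal is correct and takes essentially the same route as the paper: the forward direction is the identical weak/strong-duality argument exploiting the fact that the feasible region of $D(S,T)$ does not depend on the parametrization, and for the converse both you and the paper reduce the $\vrai=\vzero$ block to the classical core-to-optimal-dual-prices result (Lemma 7.25 of \citet{vohra2004advanced}, via Lemma~\ref{lem:relationships}) and certify each raise constraint by an item-swap lower bound on $V(N\setminus\{i\},\{i'\})$ combined with the mixed bicore inequality for the pair $(\{i\},\{i'\})$. The only cosmetic difference is that you construct the prices $p_j=b_{\sigma^{-1}(j),j}-\dis_{\sigma^{-1}(j)}$ by hand, whereas the paper extracts them from that lemma and recovers the same values through complementary slackness.
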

%
%
\begin{appendixproof}
{\bf [Theorem~\ref{thm:bicore-lp}]}
Let $(\vdis^*, \vrai^*, \vp^*)$ be an optimal solution to $D(N, \emptyset)$, which is non-negative by the dual constraints. By strong duality, $\sum_{j \in M} p_j^* + \sum_{i \in N} \dis_i^* = V(N, \emptyset)$. Let $S , T \subseteq N$ be coalitions that satisfy $S \cap T = \emptyset$, which implies $N \setminus S \supseteq T$. As $(\vdis^*, \vrai^*, \vp^*)$ is feasible for $D(N \setminus S, T)$, we have $\sum_{j \in M} p_j^* + \sum_{i \in N \setminus S} \dis_i^* - \sum_{i \in T} \rai_i \geq V(N \setminus S, T)$. Combining the prior equality and inequality, we obtain $\sum_{i \in S} \dis_i^* + \sum_{i \in T} \rai_i^* \leq V(N, \emptyset) - V(N \setminus, S, T)$, which confirms that $(\vdis^*, \vrai^*) \in \bicore(N, V)$.

For the second part of the claim, let $(\vdis, \vrai) \in \bicore(N, V)$. As the bicore is downwards-closed, $(\vdis, \vzero) \in \bicore(N, V)$, and by Lemma~\ref{lem:relationships} we thus have $\vdis \in \core(\Vw, N)$. By Lemma 7.25 in~\citep{vohra2004advanced}, there is a $\vp \in \reals^m$ such that $(\vdis, \vzero, \vp)$ is an optimal solution to $D(N, \emptyset)$. We show that $(\vdis, \vrai, \vp)$ is feasible for $D(N, \emptyset)$, and therefore optimal because the objective only depends on $\vdis$ and $\vp$. If $i \in N$ is winning, then we have $\rai_i = 0$ as $(\vdis, \vrai)$ are valid discounts and raises, by Lemma~\ref{lem:core-valid}. Thus, for every $j \in M$, we have $\pi_i - \mu_i = \pi_i \geq b_{ij} - p_j$, by the feasibility of $(\vdis, \vzero, \vp)$. Now let $i \in N$ be a losing bidders, and take any $j \in M$. Let $\vx^*$ be an optimal solution to the primal, which defines an allocation, and let $k$ be the agent that wins item $j$ in this allocation. By complementary slackness, $\pi_k = b_{kj} - p_j$. We have $\sum_{i \in N} \sum_{j \in M} b_{ij} x^*_{ij} = V(N, \emptyset)$. If we switch agent $i$ for agent $k$ in the solution, the solution has value $\sum_{i \in N} \sum_{j \in M} b_{ij} x^*_{ij} + (b_{ij} - b_{kj}) \leq V(N-k,i)$, where the latter inequality holds by the definition of $V$. Combining the previous equality and inequality yields $V(N, \emptyset) - V(N-k, i) \leq b_{kj} - b_{ij}$. Combining this with the bicore inequality $\dis_k + \rai_i \leq V(N, \emptyset) - V(N-k, i)$, we have $\dis_k + \rai_i \leq  b_{kj} - b_{ij}$. Substituting in the complementary slackness equality $b_{kj} = \pi_k + p_j$, we finally obtain $\rai_i \leq p_j - b_{ij}$. This confirms the feasibility of $(\vdis, \vrai, \vp)$ and completes the proof.
\end{appendixproof}
%

For two $n$-vectors or $m$-vectors, let $\wedge$ be the operation of taking the component-wise min, and $\vee$ the operation of taking their component-wise max. For two solutions $(\vdis^1, \vrai^1, \vp^1)$ and $(\vdis^2, \vrai^2, \vp^2)$ to the dual, define the following meet and join operations:
\begin{itemize}
    \item $(\vdis^1, \vrai^1, \vp^1) \wedge (\vdis^2, \vrai^2, \vp^2) = (\vdis^1 \vee \vdis^2, \vrai^1 \wedge \vrai^2, \vp^1 \wedge \vp^2)$.
    \item $(\vdis^1, \vrai^1, \vp^1) \vee (\vdis^2, \vrai^2, \vp^2) = (\vdis^1 \wedge \vdis^2, \vrai^1 \vee \vrai^2, \vp^1 \vee \vp^2)$.
\end{itemize}

We have the following result. The main technical insight for the proof is that although the feasible set of $D(N, \emptyset)$ does not form a lattice, we can impose the constraint $\vdis \cdot \vrai = 0$ and obtain a lattice structure without excluding any optimal solutions.

\begin{theorem} \label{thm:assignment-lattice}
The set of optimal dual solutions to $D(N, \emptyset)$ forms a lattice under the meet and join operations $(\wedge, \vee)$ just defined. At the smallest solution, $\pi_i$ is agent $i$'s VCG discount for all $i \in N$. At the largest solution, $\mu_i$ is agent $i$'s VCG raise for all $i \in N$.
\end{theorem}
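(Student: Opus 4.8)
The plan is to exploit Theorem~\ref{thm:bicore-lp}, which identifies the bicore with the optimal solutions of the dual $D(N, \emptyset)$, so that the lattice claim becomes a statement about the geometry of this dual optimal set. The crucial preliminary observation is that any optimal dual solution $(\vdis, \vrai, \vp)$ has $(\vdis, \vrai) \in \bicore(N, V)$ and is therefore valid: each winning agent has $\rai_i = 0$ and each losing agent has $\dis_i = 0$, so $\vdis \cdot \vrai = 0$ holds automatically. Imposing this orthogonality discards no optimal solution, yet it is exactly what rescues feasibility under the meet and join. I would also record at the outset that the dual objective $\sum_{j} p_j + \sum_i \dis_i$ does not involve $\vrai$, and that $(\wedge, \vee)$ correspond to the partial order in which $(\vdis^1, \vrai^1, \vp^1) \leq (\vdis^2, \vrai^2, \vp^2)$ whenever $\vdis^1 \geq \vdis^2$, $\vrai^1 \leq \vrai^2$, and $\vp^1 \leq \vp^2$ componentwise.

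The first substantive step is to verify that the meet (and, symmetrically, the join) of two optimal solutions is feasible for $D(N, \emptyset)$. I would split the verification of the dual constraint $\dis_i - \rai_i \geq b_{ij} - p_j$ by agent type. For a winning agent, $\rai_i^1 = \rai_i^2 = 0$, so only $\dis_i$ and $p_j$ matter; pairing the coordinate realizing $\max(\dis_i^1, \dis_i^2)$ with $\min(p_j^1, p_j^2)$, the inequality follows from the feasibility of whichever input supplies the smaller price, using $\dis_i \geq \dis_i^k$. For a losing agent, $\dis_i^1 = \dis_i^2 = 0$, and the mirror-image argument with $\min(\rai_i^1, \rai_i^2)$ and $\min(p_j^1, p_j^2)$ gives the constraint. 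This is the step where validity ($\vdis \cdot \vrai = 0$) is indispensable: without it the mixed term $\dis_i - \rai_i$ could not be controlled coordinatewise.

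Next I would establish optimality and conclude the lattice structure. Since $\min(a,b) + \max(a,b) = a + b$ applied coordinatewise to $\vdis$ and to $\vp$ gives objective(meet) $+$ objective(join) $=$ objective(sol1) $+$ objective(sol2) $= 2\,V(N,\emptyset)$, and both meet and join are feasible hence have objective at least $V(N, \emptyset)$, each must attain $V(N, \emptyset)$ exactly. The optimal set is therefore closed under $(\wedge, \vee)$ and forms a lattice; its least and greatest elements exist because the optimal set is a compact polytope and the operations are continuous.

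Finally, I would identify the extremal solutions. Projecting the lattice onto its $\vdis$-coordinates yields $\core(N, \Vw)$ by Lemma~\ref{lem:relationships}, the \citet{shapley1971assignment} assignment-game lattice whose top (componentwise-maximal discounts) is the vector of VCG discounts by the result of~\citet{leonard1983elicitation}; this maximal $\vdis$ sits at the smallest element of our lattice, where winning agents attain their VCG discounts and losing agents have $\dis_i = 0$, matching their (zero) VCG discount. The symmetric argument projects onto $\vrai$, giving $\core(N, \Vl)$, whose top is the vector of VCG raises, attained at the largest element. The main obstacle I anticipate is the feasibility verification of the meet and join: it is the only place where the non-standard mixing of $\vdis$ (via $\vee$) against $\vrai, \vp$ (via $\wedge$) must be reconciled, and it hinges entirely on first proving that every optimal dual solution is valid so that $\vdis \cdot \vrai = 0$ can be invoked.
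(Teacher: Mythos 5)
Your proof is correct, but it takes a genuinely different route from the paper's. The paper first shows that the orthogonality constraint $\dis_i\rai_i=0$ can be imposed on the \emph{entire} dual feasible set without losing optimal solutions (by subtracting $\min\{\dis_i,\rai_i\}$ from both coordinates), then changes variables, runs a four-case feasibility analysis, and finally invokes the submodularity machinery of \citet{vohra2004advanced} (submodularity of the dual objective, lattice of minimizers of a submodular function, and the characterization of extreme core points of submodular games) to obtain both the lattice structure and the VCG identification; a byproduct the paper explicitly values is a proof that $\Vl$ is submodular. You instead get orthogonality for free on the \emph{optimal} set via Theorem~\ref{thm:bicore-lp} and validity (Lemma~\ref{lem:core-valid}), verify closure of the optimal set under the meet and join directly with a winning/losing case split, and use the standard objective-sum identity $\min+\max=$ sum to force optimality of both the meet and the join --- a more elementary and self-contained argument that bypasses submodularity entirely (at the cost of not producing the submodularity of $\Vl$ as a corollary). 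One spot to tighten: your appeal to ``the symmetric argument'' for the raises is not a literal symmetry, since $\Vw$ restricts which agents may be matched while $\Vl$ \emph{forces} agents to be matched, so there is no a priori reason the top of $\core(N,\Vl)$ is the vector of VCG raises. But the conclusion does follow from what you have already built: by Theorem~\ref{thm:bicore-lp} and Lemma~\ref{lem:relationships}, each vector $(\vzero,(\rai_i,\vzero_{-i}))$ with $\rai_i$ equal to agent $i$'s VCG raise extends to an optimal dual solution, so the greatest lattice element dominates it componentwise, while the singleton bicore constraint $\rai_i\le V(N,\emptyset)-V(N,\{i\})$ caps each coordinate at the VCG raise; the same sandwich gives the VCG discounts at the least element without even needing \citet{leonard1983elicitation}.
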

%

%
\begin{appendixproof}
{\bf [Theorem~\ref{thm:assignment-lattice}]}    
The proof is based on the results in~\cite[Chapter 7]{vohra2004advanced}. However, in our case the feasible set of the dual problem is not a lattice. We must first show that we can restrict our attention to a subset of the feasible set which does form a lattice.

For any parametrization $(S,T)$ of the dual, we first argue that we can impose the constraint $\pi_i \mu_i = 0$ for each $i \in N$ and not change the value of the optimal solution. (Note that this amounts to the constraint that at least one of $\pi_i$ and $\mu_i$ must be 0, for each $i \in N$.) To see this take any feasible solution, let $\delta_i = \min\{\pi_i, \mu_i\}$, and modify the solution by $\pi_i \leftarrow \pi_i - \delta_i$ and $\mu_i \leftarrow \mu_i - \delta_i$. Note that this modified solution remains feasible. The objective remains unchanged if $i \in T$ and otherwise weakly decreases. Therefore, the transformed solution is weakly better, and after transformation at least one of $\mu_i$ and $\pi_i$ is 0.

We make the change of variables $\lambda_i = -\pi_i$. Consider the feasible set
\begin{eqnarray*}
& \ds p_j \geq v_{ij} + \lambda_i + \mu_i & \forall i \in N, \forall j \in M \\
& \mu_i \lambda_i = 0 & \forall i \in N
\end{eqnarray*}
We show that this is a lattice under the operations
$$(\vrai^1, \vlambda^1, \vp^1) \wedge (\vrai^2, \vlambda^2, \vp^2) = (\min\{\vrai^1, \vrai^2\}, \min\{\vlambda^1, \vlambda^2\}, \min\{\vp^1, \vp^2\})$$
$$(\vrai^1, \vlambda^1, \vp^1) \vee (\vrai^2, \vlambda^2, \vp^2) = (\max\{\vrai^1, \vrai^2\}, \max\{\vlambda^1, \vlambda^2\}, \max\{\vp^1, \vp^2\})$$
where the $\min$ and $\max$ are understood component-wise. Consider two different elements of the feasible set, satisfying the inequalities:
$$ p^1_j \geq v_{ij} + \lambda^1_i + \mu^1_i $$
$$ p^2_j \geq v_{ij} + \lambda^2_i + \mu^2_i $$
Assume w.l.o.g.\ that $p^1_j \leq p^2_j$.
Then we have $(p^1_j \wedge p^2_j) \geq v_{ij} + \lambda^1_i + \mu^1_i$, and therefore
$(p^1_j \wedge p^2_j) \geq v_{ij} + (\lambda^1_i \wedge \lambda^2_i) + (\mu^1_i \wedge \mu^2_i)$. In the following case analysis, we thus only need to show that the constraint holds for the join, and that the $\mu_i \lambda_i = 0$ constraint (referred to as cs-constraint) holds for the meet and join.
\begin{itemize}
    \item {\bf Case 1: $\mu^1_i, \mu^2_i \geq 0$, $\lambda^1_i, \lambda^2_i = 0$.}
    Assume w.l.o.g.\ that $\mu^1_i \geq \mu^2_i$. Then we have $p^1_j \geq v_{ij} + (\mu^1_i \vee \mu^2_i)$ and therefore $(p^1_j \vee p^2_j) \geq v_{ij} + (\mu^1_i \vee \mu^2_i)$. The cs-constraint holds because $(\lambda^1_i \wedge \lambda^2_i) = (\lambda^1_i \vee \lambda^2_i) = 0$.
    \item {\bf Case 2: $\mu^1_i, \mu^2_i = 0$, $\lambda^1_i, \lambda^2_i \leq 0$.} By the symmetry between $\mu_i$ and $\lambda_i$ in the feasibility constraints, this is equivalent to case 1.
    \item {\bf Case 3: $\mu^1_i \geq 0, \mu^2_i = 0$, $\lambda^1_i = 0, \lambda^2_i \leq 0$.} Here $\mu^1_i \vee \mu^2_i = \mu^1_i$ and $\lambda^1_i \vee \lambda^2_i = \lambda^1_i = 0$. Therefore, we have $p^1_j \geq v_{ij} + \lambda^1_i + \mu^1_i = v_{ij} + (\mu^1_i \vee \mu^2_i) + (\lambda^1_i \vee \lambda^2_i)$, leading to $(p^1_j \vee p^2_j) \geq v_{ij} + (\mu^1_i \vee \mu^2_i) + (\lambda^1_i \vee \lambda^2_i)$.
    \item {\bf Case 4: $\mu^1_i \geq 0, \mu^2_i = 0$, $\lambda^1_i \leq 0, \lambda^2_i = 0$.} Because of the constraint that $\mu_i \lambda_i = 0$, this reduces to either case 1 or 2.
\end{itemize}

\medskip
Let $X$ denote the feasible set
\begin{eqnarray*}
& \ds p_j \geq v_{ij} + \lambda_i + \mu_i & \forall i \in N, \forall j \in M \\
& \mu_i \lambda_i = 0 & \forall i \in N \\
& \vrai \geq 0,\, \vlambda \leq 0,\, \vp \geq 0 &
\end{eqnarray*}
which we now know forms a lattice under the component-wise max (join) and min (meet) operations, and also contains the optimal solution to the original dual problem. Let $Y = \{0,1\}^N$ be the lattice from which $d^T$ is drawn, with the same meet and join operations. Let $f: X \times Y \rightarrow \mathbf{R}$ be the objective:
$$
f(\vrai, \vlambda, \vp, \vd) = \sum_{j \in M} p_j - \sum_{i \in N} \lambda_i - \sum_{i \in N} d_i \mu_i
$$
This objective is submodular. Therefore $h: Y \rightarrow \mathbf{R}$ defined as $h(y) = \min_{x \in X} f(x,y)$ is also submodular, and its minimizers form a lattice~\cite[Theorem 7.5]{vohra2004advanced}. As $\Vl(T) = h(\vd^{N \setminus T})$, it is also submodular. The fact that $\Vw$ is submodular is proven in~\cite[Theorem 7.21]{vohra2004advanced}. This proves the first claim. 

From this lattice structure, Theorem~\ref{thm:bicore-lp}, and by inspection of~\eqref{eq:bicore-ub}, we see that smallest solution (with largest $\vdis$) has the form $(\vdis, \vzero)$, so $\vdis \in \core(N, \Vw)$ by Lemma~\ref{lem:relationships}. Similarly, the largest solution (with largest $\vrai$) has the form $(\vzero, \vrai)$, so $\vrai \in \core(N, \Vl)$, again by Lemma~\ref{lem:relationships}. The final claim then follows from~\cite[Theorem 7.24]{vohra2004advanced} and the submodularity of $\Vw$ and $\Vl$.
\end{appendixproof}
%

By this theorem and the preceding Theorem~\ref{thm:bicore-lp}, it follows that $\bicore(N, V)$ also has a lattice structure in the assignment problem, and combined with Lemma~\ref{lem:relationships}, so do $\core(N, \Vw)$ and $\core(N, \Vl)$. As part of the proof, we show that $\Vl$ is submodular, in analogy to the fact that $\Vw$ is submodular~\cite[Theorem 7.21]{vohra2004advanced}.


\section{Bidding Dynamics}
\label{sec:bidding-dynamics}

In the previous sections we developed three alternative policies for computing generalized minimum-bid-to-win feedback: VCG, core, and bicore feedback. How should they be evaluated against each other? To compare them, we propose to empirically examine the bidding dynamics they induce if agents follow the feedback. We are interested in verifying whether the dynamics converge, and if so, how quickly they converge and whether they converge to an efficient allocation.

We consider a video pod setting where each agent $i$ has a uniform value $v_i$ across positions and an ad duration.
To define the dynamics we assume that each agent plays a $\rho_i$-profit-target strategy in each round, where $\rho_i \in [0, v_i]$, and the auction provides valid discounts $\vdis$ and raises $\vrai$ as feedback. Let $\epsilon$ be a small bid increment. Recall that a decrease in the profit target corresponds to an increase in the bid, and vice-versa. The bidding dynamics are defined as follows:
\begin{itemize}
    \item A \emph{winning} agent $i$ updates its target according to: $\rho_i \leftarrow \min\{ \rho_i + \dis_i - \epsilon, v_i \}$.
    \item A \emph{losing} agent $i$ updates its target according to:  $\rho_i \leftarrow \max\{ \rho_i - \rai_i - \epsilon, 0 \}$.
\end{itemize}
The bid increment $\epsilon$ is important to avoid stalls in the dynamics. For instance, suppose all agents initially place a bid of 0. Then every allocation is trivially optimal, and all agents are both winning and losing. We would therefore have $\vdis = \vrai = \vzero$ with no progress in the bidding. To guard against ties of this sort, each agent should apply a small increment of $\epsilon$ to its bid updates (equivalently, decrease its profit-target by $\epsilon$) in an effort to ensure that it wins in every optimal allocation. In a single-item auction (Example~\ref{ex:single-item-auction}), it is straightforward to check that these dynamics lead to a standard ascending-price auction with bid increment of $\epsilon$ when starting from bids of 0. 

\begin{example} \label{ex:zero-vcg-dynamics}
To give an illustration of how the dynamics might behave, Table~\ref{tab:zero-vcg-dynamics} shows their evolution for the auction instance given in Example~\ref{ex:zero-vcg}, assuming that bidders initially bid their true values.  We observe that the VCG dynamics cycle between the efficient allocation to agents 2 and 3 and an inefficient allocation to agent 1. The reason is that VCG feedback induces agents to "overshoot" with their bid updates, leading to oscillating bids. Under core feedback, on the other hand, agents 2 and 3 coordinate their bid discounts so that they jointly still exceed the bid of agent 1 in the second round, leading to convergence. Note, however, that the core feedback for the first round is not unique. Another possible update would have led to bids (10, 10, $\epsilon$) in the second round, which is another possible endpoint of the dynamics.
\begin{table}[ht]
\centering
\begin{tabular}{l|rrr|rrr|rrr}
   \multicolumn{1}{c}{} & \multicolumn{3}{c}{VCG} & \multicolumn{3}{c}{Core} & \multicolumn{3}{c}{Bicore}  \\
   \toprule
   Round & Bid 1 & Bid 2 & Bid 3 & Bid 1 & Bid 2 & Bid 3 & Bid 1 & Bid 2 & Bid 3 \\
   \midrule
   1     & 10 & {\bf 10} & {\bf 10} & 10 & {\bf 10} & {\bf 10} & 10 & {\bf 10} & {\bf 10} \\
   2     & {\bf 10} & $\epsilon$ & $\epsilon$ & 10 & {\bf 5} + $\bm{\epsilon}$ & {\bf 5} + $\bm{\epsilon}$ & 10 & {\bf 10} & {\bf 10} \\
   3     & 3$\epsilon$ & {\bf 10} & {\bf 10} & 10 & {\bf 5} + $\bm{\epsilon}$ & {\bf 5} + $\bm{\epsilon}$ & & & \\
   4     & {\bf 10} & $\epsilon$ & $\epsilon$ & & & & & \\  
  \bottomrule
\end{tabular}
\medskip
\caption{Evolution of the bidding dynamics for Example~\ref{ex:zero-vcg} under the three feedback policies. Bold bids indicate the winning agents at each round.}
\label{tab:zero-vcg-dynamics}
\end{table}
\end{example}
\noindent
Bicore feedback is also not unique. If raise $\rai_1 = 10$ is sent to agent 1, then as explained in Example~\ref{ex:zero-vcg-revisited} the only possible discounts that leave the optimal allocation unchanged, were agent 1 to apply the recommended raise, are $\dis_2 = \dis_3 = 0$. If we send this feedback, however, the bids do not change because agent 1 is already bidding its value, and the dynamics have already converged by round 2.

\section{Empirical Analysis}
\label{sec:empirical-analysis}

 In this section we report on simulations of the bidding dynamics described in Section~\ref{sec:bidding-dynamics}.
The dataset for our simulations consists of 10,000 video pod auctions sampled randomly from the logs of an ad exchange. For each auction we have the number of positions, maximum total ad duration, and information on the bidders: their uniform value per position, and their ad duration. We restricted our attention to auctions with 3--5 bidders where the constraints on the number of ads and total ad duration were binding, to make sure they represented nontrivial allocation problems. We did not consider auctions with exclusion constraints, or where bidders had specific values per position. We used the agents' bids recorded in the logs as their values for the sake of simulation. Each agent $i$'s initial profit target was drawn uniformly from $[0, v_i]$, and its bid increment was set to $v_i/10$.

There are three termination criteria: convergence, cycling, or max-rounds. We imposed a cap of 20 rounds on each auction. We consider that bidding dynamics have converged if the change in $\sum_i b_i$ is less than 1\% from one round to the next. We consider that they have cycled if the agents place bids that are identical to bids from a previous round, not including the immediate prior round, and if the change in $\sum_i b_i$ from the prior round was at least 10\%.
Core and bicore feedback was obtained by solving linear programs.

The simulation results are presented in Table~\ref{tab:simulation-results}. For each combination of feedback policy and number of bidders, the table first shows the average number of rounds and average efficiency upon termination. The next three columns give the frequencies of the three termination criteria, in percentage terms. The last column gives the average efficiency of the allocation when a cycle is detected---cycles only occurred under VCG feedback.

We see that core and bicore feedback led to convergence in over 99\% of the instances, and never cycled. VCG feedback similarly has high convergence rate, but with 3 bidders it led to cycling in over 8\% of the instances. The frequency of cycling drops sharply with the number of bidders, but the allocations that are reached upon cycling are more inefficient.

\begin{table}[tbh]
\centering
\small
\vspace{10pt}
\begin{tabular}{lcrrrrrr}
  \toprule
Policy & Bidders & Avg. Rounds & Avg. Eff. & Conv. & Cycle & Max-Rnds & Avg. Cycle Eff. \\ 
  \midrule
   VCG & 3 & 5.60 (0.03) & 99.50 (0.05) & 90.44 & 8.42 & 1.14 & 96.23 (0.46) \\ 
   CORE & 3 & 5.40 (0.03) & 100 (0.00) & 99.90 & 0.00 & 0.10 & \color{gray} n/a \\ 
   BICORE & 3 & 7.11 (0.03) & 100 (0.00) & 99.88 & 0.00 & 0.12 &  \color{gray} n/a \\ 
   \midrule
   VCG & 4 & 5.38 (0.05) & 99.82 (0.05) & 98.38 & 1.27 & 0.35 & 88.00 (3.31) \\ 
   CORE & 4 & 5.31 (0.04) & 100 (0.00) & 99.91 & 0.00 & 0.09 &  \color{gray} n/a \\ 
   BICORE & 4 & 6.77 (0.06) & 100 (0.00) & 99.09 & 0.00 & 0.91 &  \color{gray} n/a \\ 
   \midrule
   VCG & 5 & 4.97 (0.05) & 99.97 (0.03) & 99.83 & 0.17 & 0.00 & 83.22 (6.34) \\ 
   CORE & 5 & 4.95 (0.05) & 100 (0.00) & 99.91 & 0.00 & 0.09 &  \color{gray} n/a \\ 
   BICORE & 5 & 5.58 (0.07) & 100 (0.00) & 99.64 & 0.00 & 0.36 &  \color{gray} n/a \\ 
   \bottomrule
\end{tabular}
\medskip
\caption{Simulation results. All metrics except for average rounds are reported in percentage terms. For averages, standard errors are given in parentheses.} 
\vspace{10pt}
\label{tab:simulation-results}
\end{table}

\noindent
All feedback policies converge in just a few rounds on average, with bicore taking 1--2 more rounds than the other two policies on average. Core and bicore reach the max-rounds cap of 20 in less than 1\% of the instances, and in these cases the final allocation was always efficient.

\begin{figure}[t!]
        \centering
	\includegraphics[scale=0.3]{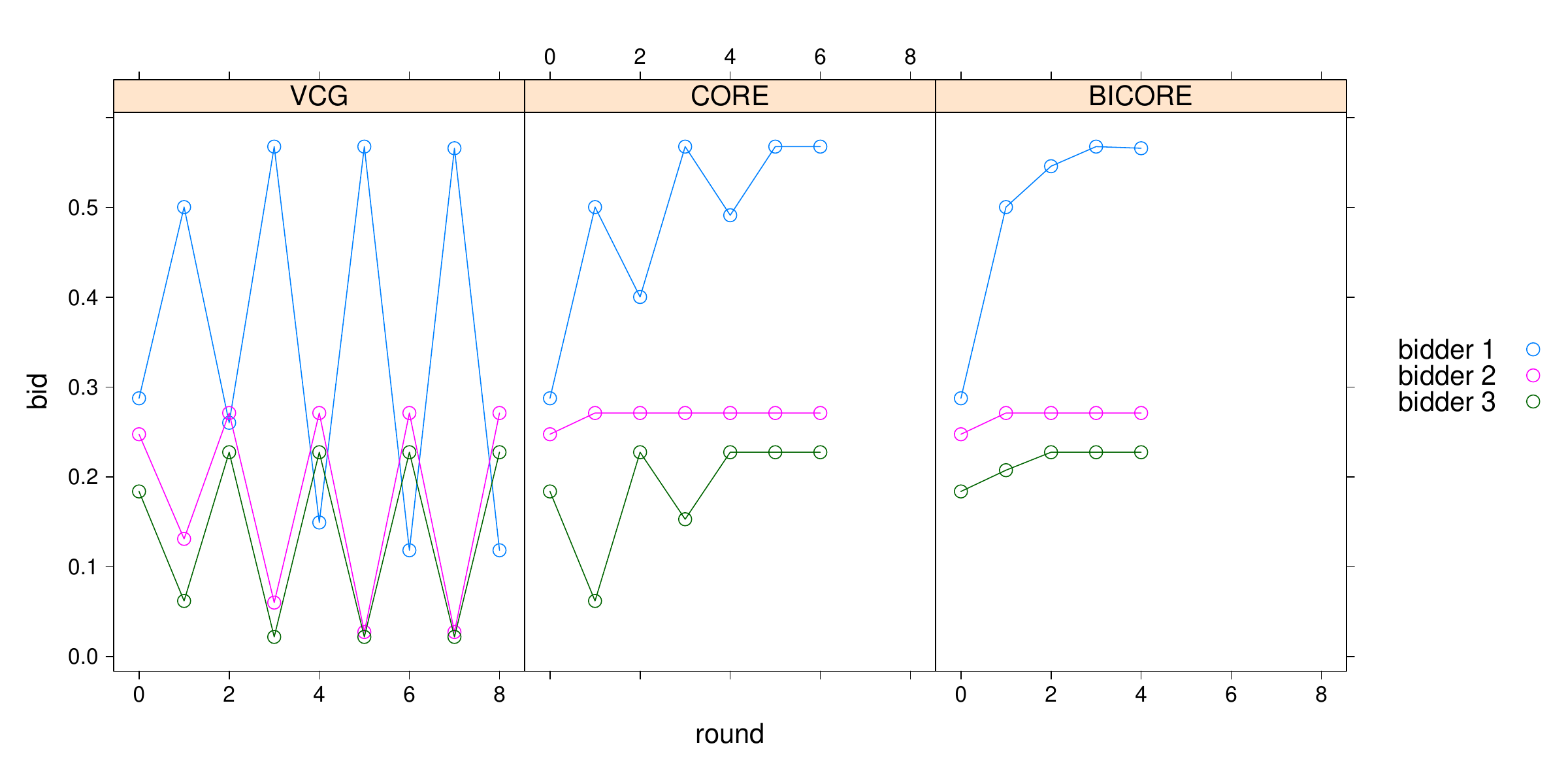}
        \vspace{-10pt}
	\caption{Instance of cycling under VCG feedback in the simulations. Core and bicore feedback both converge to the same efficient endpoint.}
	\label{fig:cycling-example}
\end{figure}
Figure~\ref{fig:cycling-example} plots the evolution of bids for an instance where VCG feedback led to cycling. Because the plot is based on confidential data, the y-axis has been rescaled by a constant. We see in this realistic instance the same oscillation phenomenon described earlier in Example~\ref{ex:zero-vcg-dynamics}. Here core and bicore feedback converge to the same endpoint, but there are some bid oscillations under core feedback whereas the bid updates are closer to monotone under bicore feedback, due to the fact that bid updates are coordinated across all agents (both winning and losing).


\section{Conclusion}
\label{sec:conclusion}

This work proposed three generalizations of the minimum-bid-to-win feedback provided in single-slot display ad auctions to general combinatorial auctions, motivated by the concrete use case of video pods. 
Our approach was to characterize the sets of bid updates (discounts for winning bidders, raises for losing bidders) that maintain optimality of the current optimal allocation. A generalized minimum-bid-to-win is then any maximal vector under this characterization. For updates by a single agent at a time, this led us to define the concept of the VCG raise, in analogy to the VCG discount. For updates by multiple winning (or losing) bidders at once, we provided characterizations based on the core of associated cooperative games. For our main result, we proposed a core concept for bicooperative games called the bicore, and showed that it completely characterizes the bid updates that maintain optimality of the current optimal allocation for any mix of winning and losing bidders.

For the assignment problem---a special case of video pod auctions---we provided a linear programming characterization of the bicore, providing an efficient way to compute feedback for bidders. It was previously known that bidder-optimal core payoffs, associated with minimal clearing prices, correspond to VCG discounts in the assignment problem~\citep{leonard1983elicitation}. Our results show that maximal clearing prices are associated with the bidders' VCG raises, providing a new interpretation of the seller-optimal payoff point. There are linear programming characterizations of the core for several well-known optimization problems, including facility location and linear production models~\citep{goemans2004cooperative,owen1975core}. Extending these characterizations to the bicore, if possible, may offer additional structural insights.


There are two practical challenges to computing core and bicore feedback: they involve a large number of constraints, and they require solving multiple allocation problems to obtain the coalitional values. The three generalizations we investigated are in fact part of a spectrum, with VCG and bicore being the two extremes. Any subset of the bicore constraints leads to a feedback policy. Taking a data-driven approach to select a small subset of core or bicore constraints likely to bind may work well in practice~\citep{balcan2022structural}. For the coalitional values, several allocation problems need to be solved even for VCG, but the problems are very closely related. Approximations of these values could be obtained during a search of the allocation space in a branch-and-bound algorithm, for instance, and it may be possible to tune the search to explore the most relevant coalitions. This is a promising avenue for future work.



\bibliographystyle{abbrvnat}
\bibliography{mbtw-arxiv}

\end{document}